\documentclass[11pt,reqno]{amsart}

\usepackage{amssymb,amsmath,amsthm,amscd,latexsym,amsfonts}
\usepackage{mathtools}
\usepackage[T1]{fontenc}
\usepackage{graphicx,epstopdf}
\usepackage{graphicx}
\usepackage{xcolor}
\usepackage{cite}

\usepackage[a4paper,top=3cm, bottom=3cm, left=3.1cm, right=3.1cm]{geometry}

\newtheorem{thm}{Theorem}
\newtheorem{defn}{Definition}
\newtheorem{lemma}{Lemma}
\newtheorem{pro}{Proposition}
\newtheorem{rk}{Remark}
\newtheorem{cor}{Corollary}

\numberwithin{equation}{section} \setcounter{tocdepth}{1}

\newcommand{\bea}{\begin{eqnarray}}
\newcommand{\eea}{\end{eqnarray}}

\newcommand{\Z}{\mathbb{Z}}




\def\O{\Omega}

\def\O{\Omega}

\def\L{\Lambda}

\def\Z{\mathbb{Z}}

\def \L {\Lambda}

\setcounter{MaxMatrixCols}{20}


\begin{document}
\title [Gradient Gibbs measures for SOS model]
{Mirror symmetry of height-periodic gradient Gibbs measures of an SOS model on Cayley trees}

\author {U.A. Rozikov}

\address{ U.Rozikov$^{a,b,c}$\begin{itemize}
 \item[$^a$] V.I.Romanovskiy Institute of Mathematics,  9, Universitet str., 100174, Tashkent, Uzbekistan;
\item[$^b$] AKFA University, National Park Street, Barkamol MFY,
Mirzo-Ulugbek district, Tashkent, Uzbekistan;
\item[$^c$] National University of Uzbekistan,  4, Universitet str., 100174, Tashkent, Uzbekistan.
\end{itemize}}
\email{rozikovu@yandex.ru}

\begin{abstract}
 For the  solid-on-solid (SOS) model with spin values from the set of all integers on a Cayley tree we give gradient Gibbs measures (GGMs). Such a measure corresponds to a boundary law (which is an infinite-dimensional vector-valued function defined on vertices of the Cayley tree) satisfying an infinite system of functional equations. We give several concrete GGMs of boundary laws which are independent from vertices of the Cayley tree and (as an infinite-dimensional vector) have periodic, (non-)mirror-symmetric coordinates. Namely, the particular class of height-periodic  boundary laws of period $q\leq 5$ is studied, where solutions are classified by
 their period and (two-)mirror-symmetry. 

\end{abstract}
\maketitle

{\bf Mathematics Subject Classifications (2010).} 82B26 (primary);
60K35 (secondary)

{\bf{Key words.}} {\em SOS model, configuration, Cayley tree,
Gibbs measure, gradient Gibbs measures, boundary law}.

\section{Introduction}

For a lattice model with a compact set of spin values, it is known that the set of Gibbs measures is non-empty (\cite{BR}, \cite{EHR}, \cite{FV}, \cite{Ge}). But for models with a non-compact set of spin values, the existence problem of the Gibbs measure remains one of the important problems in physics and statistical mechanics, because for systems with an infinite set of spin values the set of Gibbs measures may be empty. In such a situation some gradient Gibbs measures (GGMs) may exist (for detailed motivations and very recent results see \cite{BiKo}, \cite{BEvE}, \cite{HR}, \cite{HKLR}, \cite{HKa}, \cite{HKb}, \cite{Hphd}, \cite{KS}, \cite{Sh}, \cite{Z1}).

This paper is devoted to GGMs of the SOS model with spin values from the set of all integers  on Cayley trees.
 
\subsection{Definitions and known results}
We consider models where spin-configuration $\omega$ is a function from the
vertices of the Cayley tree $\Gamma^k=(V, \vec L)$  to the set $\Z$ of integer numbers, where
$V$ is the set of vertices and $\vec L$ is the set of oriented edges (bonds) of the tree
(see Chapter 1 of \cite{Ro} for properties of the Cayley tree).

For any configuration  $\omega = (\omega(x))_{x \in V} \in \mathbb Z^V$ and edge $b = \langle x,y \rangle$ of $\Gamma^k$
the \textit{difference} along the edge $b$ is given by $\nabla \omega_b = \omega(y) - \omega(x)$, where $\omega_b$ is a configuration on  $b = \langle x,y \rangle$, i.e., $\omega_b=\{\omega(x), \omega(y)\}$.  The configuration $\nabla \omega$ is called the \textit{gradient field} of $\omega$ (see \cite{KS}, \cite{K}).

The gradient spin variables are now defined by $\eta_{\langle x,y \rangle} = \omega(y) - \omega(x)$ for each $\langle x,y \rangle$.

The space of \textit{gradient configurations} is denoted by $\O^\nabla$. The measurable structure on the space $\Omega^{\nabla}$ is given by $\sigma$-algebra $$\mathcal{F}^\nabla:=\sigma(\{ \eta_b \, \vert \, b \in \vec L \}).$$

Let $\mathcal{T}_{\Lambda}^{\nabla}$ be the sigma-algebra of gradient configurations outside of the finite volume $\Lambda$ is generated by all gradient variables outside of $\Lambda$ and the relative height-difference on the boundary of $\Lambda$. 

For nearest-neighboring interaction  potential $\Phi=(\Phi_b)_b$, where
$b=\langle x,y \rangle$ is an edge,  define symmetric transfer matrices $Q_b$ by
\begin{equation}\label{Qd}
Q_b(\omega_b) = e^{- \big(\Phi_b(\omega_b) + | \partial x|^{-1} \Phi_{\{x\}}(\omega(x)) + |\partial y |^{-1} \Phi_{\{y\}} (\omega(y)) \big)},
\end{equation}
where $\partial x$ is the set of all nearest-neighbors of $x$ and $|S|$ denotes the number of elements of the set $S$.

To introduce the notion of \emph{translations} on the Cayley tree
$\Gamma^k$, one uses its group representation $G_k$ which is  the free group with
generators $a_1, \dots,\allowbreak a_{k+1}$ of order $2$ each (i.e.,
$a_i^{-1}=a_i$). It is known (see, for example, \cite[Section~2.2]{Ro})
that the vertices of the Cayley tree are
in a one-to-one correspondence with the elements of the group $G_k$.
Consider the family of \emph{left} shifts $T_g\colon
G_k\to G_k$ ($g\in G_k$) defined by $T_g(a):=g a$, $ a\in G_k$.

This group is used to define translation-invariance of functions defined on the vertices
of the Cayley tree. In particular, the following definition is used
\begin{defn}\label{tid}  The potential $\Phi=(\Phi_b)_b$ is called invariant under a subgroup $\hat G_k\subset G_k$ of
translations if for any $g\in \hat G_k$ and $b=\langle x, y\rangle$ one has
$\Phi_{gb}(g \omega_{b})=\Phi_b(\omega_b)$, where $gb=\langle gx, gy\rangle$ and $g\omega$  is defined by $g\omega(x) =
\omega(g^{-1}x)$, $x\in G_k$.

In the case $\hat G_k= G_k$ the potential is called translation-invariant.
\end{defn}
Similarly one can define translation-invariant Gibbs measures (see \cite[Section 2.3.1]{BR}).

Define the Markov (Gibbsian) specification as
$$
\gamma_\Lambda^\Phi(\sigma_\Lambda = \omega_\Lambda | \omega) = (Z_\Lambda^\Phi)(\omega)^{-1} \prod_{b \cap \Lambda \neq \emptyset} Q_b(\omega_b).
$$

If for any bond $b=\langle x,y \rangle$ the transfer operator $Q_b(\omega_b)$ is
a function of gradient spin variable $\zeta_b=\omega(y)-\omega(x)$ then the underlying potential $\Phi$ is called
a \textit{gradient interaction potential}.

 \emph{Boundary laws} (see \cite{Z1}) which allow to describe the subset of $\mathcal{G}(\gamma)$ of all Gibbs measures.

\begin{defn}\label{def} 
	
	\begin{itemize}
\item	A family of vectors $\{ l_{xy} \}_{\langle x,y \rangle \in \vec L}$ with $l_{xy}=\left(l_{xy}(i) : i\in \Z\right) \in (0, \infty)^\Z$ is called a {\em boundary law for the transfer operators $\{ Q_b\}_{b \in \vec L}$} if for each $\langle x,y \rangle \in \vec L$ there exists a constant  $c_{xy}>0$ such that the consistency equation
	\begin{equation}\label{eq:bl}
	l_{xy}(i) = c_{xy} \prod_{z \in \partial x \setminus \{y \}} \sum_{j \in \Z} Q_{zx}(i,j) l_{zx}(j)
	\end{equation}
	holds for every $i \in \Z$.
\item  A boundary law $l$ is said to be {\em normalisable} if and only if
\begin{equation}\label{Norm}
	\sum_{i \in \Z} \Big( \prod_{z \in \partial x} \sum_{j \in \Z} Q_{zx}(i,j) l_{zx}(j) \Big) < \infty
\end{equation} at any $x \in V$.

\item 	A boundary law 	is called {\em $q$-height-periodic} (or $q$-periodic) if $l_{xy} (i + q) = l_{xy}(i)$
	for every oriented edge $\langle x,y \rangle \in \vec L$ and each $i \in \Z$.
	\end{itemize}
\end{defn}

It is known that there is an one-to-one correspondence between boundary laws
and tree-indexed Markov chains if the boundary laws are  normalisable \cite{Z1}. In recent papers \cite{GRH}, \cite{HKb} some normalisable boundary laws are found.

In \cite{HKLR},  \cite{HKa}, \cite{HKb}, \cite{KS}
some non-normalisable boundary laws are used to give GGM. Following \cite{KS} and \cite{K} let us give definition of GGM:

\begin{defn}  The gradient Gibbs specification is defined as the family of probability kernels $\left(\gamma_{\Lambda}^{\prime}\right)_{\Lambda \Subset V}$ from $\left(\Omega^{\nabla}, \mathcal{T}_{\Lambda}^{\nabla}\right)$ to $\left(\Omega^{\nabla}, \mathcal{F}^{\nabla}\right)$ such that
$$
\int F(\rho) \gamma_{\Lambda}^{\prime}(d \rho \mid \zeta)=\int F(\nabla \varphi) \gamma_{\Lambda}(d \varphi \mid \omega)
$$
for all bounded $\mathcal{F}^{\nabla}$-measurable functions $F$, where $\omega \in \Omega$ is any height-configuration with $\nabla \omega=\zeta$.
\end{defn}
\begin{defn} 
A probability measure $\nu$ on $\Omega^{\nabla}$ is called a GGM if it satisfies the $DLR$ equation
$$
\int \nu(d \zeta) F(\zeta)=\int \nu(d \zeta) \int \gamma_{\Lambda}^{\prime}(d \tilde{\zeta} \mid \zeta) F(\tilde{\zeta})
$$
for every finite $\Lambda \subset V$ and for all bounded functions $F$ on $\Omega^{\nabla}$. 
\end{defn}

For $\Lambda\subset V$, fix  a site $w \in \Lambda$.
If the boundary law $l$ is assumed to be $q$-height-periodic, then take  $s \in \mathbb{Z}_q=\{0,1,\dots,q-1\}$ and define probability measure $\nu_{w,s}$ on $\mathbb{Z}^{\{b \in \vec L \mid b \subset \Lambda\}}$ by
$$
\nu_{w,s}(\eta_{\Lambda \cup \partial \Lambda}=\zeta_{\Lambda \cup \partial \Lambda})
$$
$$=Z^\Lambda_{w,s}\prod_{y \in \partial \Lambda} l_{yy_\L}\Bigl (T_q(
s+\sum_{b\in \Gamma(w,y)}\zeta_b)
\Bigr) \prod_{b \cap \Lambda \neq \emptyset}Q_b(\zeta_b),
$$
where $Z^\Lambda_{w,s}$ is a normalization constant, $\Gamma(w,y)$ is the unique path from $w$ to $y$
and $T_q: \mathbb{Z} \rightarrow \mathbb{Z}_q$ denotes the coset projection.

\begin{thm} \cite{KS}
	Let $(l_{\langle xy\rangle })_{\langle x,y\rangle  \in \vec L}$ be any $q$-height-periodic boundary law for some gradient interaction potential.
Fix any site $w \in V$ and any class label $s \in \mathbb{Z}_q$. Then
$$	\nu_{w,s}(\eta_{\Lambda \cup \partial \Lambda}=\zeta_{\L\cup\partial\L})
	$$
\begin{equation}
=Z^\Lambda_{w,s} \prod_{y \in \partial \Lambda} l_{yy_\L}\Bigl (T_q(
	s+\sum_{b\in \Gamma(w,y)}\zeta_b)
	\Bigr) \prod_{b \cap \Lambda \neq \emptyset}
	Q_b(\zeta_b)
	\end{equation}
gives a consistent family of probability measures on the gradient space $\Omega^\nabla$.
Here $\Lambda$ with $w \in  \L \subset V$ is any finite connected set,
$\zeta_{\L\cup\partial\L} \in \Z^{\{b \in \vec L \mid b \subset (\L \cup \partial\L)\}}$ and $Z^\Lambda_{w,s}$ is a normalization constant.
\end{thm}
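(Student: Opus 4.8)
The plan is to verify the consistency of the family $\{\nu_{w,s}^{\Lambda}\}$ directly, exploiting the absence of cycles in the Cayley tree to reduce the problem to a single–vertex computation in which the boundary law equation (\ref{eq:bl}) does all the work. First I would record that the right–hand side of the stated formula is genuinely a function of the gradient configuration: every argument of a boundary–law factor has the form $T_q\!\left(s+\sum_{b\in\Gamma(w,y)}\zeta_b\right)$, which depends only on the increments $\zeta_b$ and on the fixed class label $s$, while every transfer factor $Q_b(\zeta_b)$ is a function of the gradient variable alone (this is exactly where the hypothesis that $\Phi$ is a gradient interaction potential is used). Hence $\nu_{w,s}^{\Lambda}$ is a well–defined probability measure on $\Z^{\{b\in\vec L\mid b\subset(\Lambda\cup\partial\Lambda)\}}$, and the consistency to be proved is that the marginal of $\nu_{w,s}^{\Lambda'}$ onto the edges of $\Lambda\cup\partial\Lambda$ equals $\nu_{w,s}^{\Lambda}$ whenever $\Lambda\subset\Lambda'$ are finite connected sets containing $w$.

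Second, by building $\Lambda'$ from $\Lambda$ one adjacent vertex at a time, it suffices to treat the elementary enlargement $\Lambda'=\Lambda\cup\{u\}$ obtained by absorbing a single boundary vertex $u\in\partial\Lambda$ into $\Lambda$. Let $u_\Lambda$ be the unique neighbour of $u$ inside $\Lambda$, and let $z_1,\dots,z_m$ be the remaining neighbours of $u$, which now constitute the new boundary vertices in $\partial\Lambda'$. Passing from $\nu_{w,s}^{\Lambda'}$ to $\nu_{w,s}^{\Lambda}$ amounts to summing over the new increments $\zeta_{\langle u,z_r\rangle}$, $r=1,\dots,m$. Writing $h:=s+\sum_{b\in\Gamma(w,u)}\zeta_b\in\Z$ for the integer height accumulated at $u$ along the path from $w$, the path to each child satisfies $\sum_{b\in\Gamma(w,z_r)}\zeta_b=(h-s)+\zeta_{\langle u,z_r\rangle}$, so the boundary factor of $z_r$ in $\nu_{w,s}^{\Lambda'}$ is $l_{z_r u}\!\left(T_q(h+\zeta_{\langle u,z_r\rangle})\right)$, whereas $u$, now interior, contributes only the transfer factors $Q_{u z_r}(\zeta_{\langle u,z_r\rangle})$; all factors attached to edges of $\Lambda\cup\partial\Lambda$ (including $Q_{\langle u_\Lambda,u\rangle}$) are untouched by the summation.

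Third, the sum over the new increments factorizes over $r$. For each $r$, using that $Q_{u z_r}$ is a function of the gradient variable (so that, in the height notation of (\ref{eq:bl}), $Q_{u z_r}(\zeta)=Q_{z_r u}(h,h+\zeta)$) and that $l_{z_r u}$ is $q$-periodic, the substitution $j=h+\zeta_{\langle u,z_r\rangle}$ gives $\sum_{\zeta}Q_{u z_r}(\zeta)\,l_{z_r u}\!\left(T_q(h+\zeta)\right)=\sum_{j\in\Z}Q_{z_r u}(h,j)\,l_{z_r u}(j)$. By the boundary law equation (\ref{eq:bl}) at $x=u$ with excluded neighbour $y=u_\Lambda$, the product over $r$ of these sums equals $c_{u u_\Lambda}^{-1}\,l_{u u_\Lambda}(h)=c_{u u_\Lambda}^{-1}\,l_{u u_\Lambda}\!\left(T_q\!\left(s+\sum_{b\in\Gamma(w,u)}\zeta_b\right)\right)$, which is precisely the boundary–law factor carried by $u\in\partial\Lambda$ in $\nu_{w,s}^{\Lambda}$. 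The positive constant $c_{u u_\Lambda}$ is absorbed into the normalizations $Z_{w,s}^{\Lambda'}$ and $Z_{w,s}^{\Lambda}$, and since both sides are normalized probability measures the equality of the unnormalized weights forces the marginals to coincide. Iterating over the vertices of $\Lambda'\setminus\Lambda$ then yields the full consistency.

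Finally, I expect the principal obstacle to be the careful bookkeeping of the coset projections $T_q(\cdot)$ across the elementary enlargement: one must check that replacing the genuine integer height sums by their classes in $\mathbb{Z}_q$ is compatible with summing each increment $\zeta_{\langle u,z_r\rangle}$ over all of $\Z$. This is exactly where the $q$-height-periodicity of the boundary law is indispensable, since it is what converts the infinite sum over increments into the height–localized sum $\sum_{j\in\Z}Q_{z_r u}(h,j)\,l_{z_r u}(j)$ appearing in (\ref{eq:bl}) and simultaneously guarantees that the resulting measure depends on $s$ only through its class in $\mathbb{Z}_q$ together with a global height shift, so that it is indeed supported on the gradient space $\Omega^\nabla$.
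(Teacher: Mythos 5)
This theorem is imported by the paper from the reference [KS] (K\"ulske--Schriever); the paper itself offers no proof, only the citation. Your argument is correct and is essentially the standard proof given in that source: reduce consistency to a single-vertex enlargement $\Lambda'=\Lambda\cup\{u\}$, factorize the sum over the new increments across the children of $u$, collapse each factor via the boundary law equation (\ref{eq:bl}) at $x=u$ with excluded neighbour $u_\Lambda$ (with $q$-periodicity justifying the identification $l(T_q(\cdot))=l(\cdot)$ and guaranteeing, together with positivity, that all the relevant sums and normalization constants are finite), and absorb the constant $c_{uu_\Lambda}$ into the normalization.
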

	The measure $\nu_{w,s}$ is called a pinned gradient measure.

If the $q$-height-periodic boundary law  and the underlying potential are translation invariant (see Definition \ref{tid})  then it is possible to obtain a
probability measure $\nu$ on the gradient space by mixing the pinned gradient measures:

\begin{thm}\cite{KS}\label{KF}	
Let a $q$-height-periodic boundary law $l$  and  its gradient interaction potential are translation invariant.
Let $\Lambda \subset V$ be any finite connected set and let $w\in \Lambda$ be any vertex. Then the measure $\nu $ with marginals given by
\begin{equation}
\nu (\eta_{\L\cup\partial \L} = \zeta_{\L\cup\partial\L}) = Z_\L \ \left(\sum_{s\in\Z_q}  \prod_{y \in \partial\L} l \big(s + \sum_{b \in \Gamma(w,y)} \zeta_{b}\big)  \right)\prod_{b \cap \L \neq \emptyset} Q(\zeta_b),
\end{equation}
where $Z_\L$ is a normalisation constant, defines a translation invariant GGM on $\Omega^\nabla$.
\end{thm}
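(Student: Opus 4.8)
The plan is to exhibit the mixed measure $\nu$ as a finite convex combination of the pinned measures $\nu_{w,s}$, $s\in\Z_q$, furnished by the previous theorem, and then to verify its two defining features separately: that $\nu$ obeys the DLR equation for the gradient Gibbs specification $\gamma'$ (so that it is a GGM), and that it is invariant under every shift $T_g$, $g\in G_k$. The first feature will be inherited from the previous theorem almost for free; the translation invariance is the genuinely new content, and it is precisely here that the $q$-periodicity of $l$ is used.

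Write $f_s(\zeta_{\L\cup\partial\L})=\prod_{y\in\partial\L}l\big(s+\sum_{b\in\Gamma(w,y)}\zeta_b\big)\prod_{b\cap\L\neq\emptyset}Q(\zeta_b)$, so that $\nu_{w,s}$ has unnormalised mass $f_s$ while $\nu$ has unnormalised mass $\sum_{s\in\Z_q}f_s$. To see that the mixture is a consistent family, fix $\L\subset\Delta$ and sum out the gradient variables on $\Delta\setminus\L$. For each fixed $s$ the consistency established in the previous theorem says this operation multiplies $f_s$ by a proportionality factor; the point is that this factor is a product of the boundary-law constants $c_{xy}$ of \eqref{eq:bl} over the peeled edges and therefore does not depend on $s$. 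Hence, writing $g_\L(s)=\sum_\zeta f_s(\zeta)$, one gets $g_\Delta(s)=K_{\L,\Delta}\,g_\L(s)$ with $K_{\L,\Delta}$ independent of $s$, so the effective mixing weights $\lambda_s=g_\L(s)/\sum_{s'}g_\L(s')$ are the same for every $\L$. Thus $\nu=\sum_{s\in\Z_q}\lambda_s\,\nu_{w,s}$ is a bona fide convex combination with volume-independent weights, and since the DLR equation is affine in the measure, $\nu$ is a GGM.

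For the translation invariance I would first show that the marginals of $\nu$ are independent of the base point $w$. On a tree the path sum $\sum_{b\in\Gamma(w,y)}\zeta_b$ equals the height difference $\omega(y)-\omega(w)$ for any $\omega$ with $\nabla\omega=\zeta$, so replacing $w$ by $w'$ shifts every path sum by the single constant $c=\sum_{b\in\Gamma(w,w')}\zeta_b$, the same for all $y\in\partial\L$. Summing $\prod_{y}l(\,\cdot\,)$ over $s\in\Z_q$ and reindexing $s\mapsto s+c\pmod q$ absorbs this shift, because $l$ is $q$-periodic and $s\mapsto s+c$ permutes $\Z_q$; hence $\sum_s f_s$, and with it $\nu$, does not depend on $w$. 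Granting this, let $g\in G_k$: the shift $T_g$ carries $\L$ to $g\L$, edges to edges and $\zeta$ to $g\zeta$, translation invariance of the potential gives $\prod_{b\cap\L\neq\emptyset}Q(\zeta_b)=\prod_{b\cap g\L\neq\emptyset}Q((g\zeta)_b)$, and translation invariance of $l$ together with the identification $\Gamma(gw,gy)=g\,\Gamma(w,y)$ matches the boundary factors once the $g\L$-marginal is evaluated at the base point $gw$. By the $w$-independence just proved this choice is immaterial, so $\nu(\eta_\L=\zeta)=\nu(\eta_{g\L}=g\zeta)$ for all $g$, i.e.\ $\nu$ is translation invariant.

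The step I expect to be the main obstacle is the bookkeeping behind the $w$-independence: one must check that the path sums genuinely shift by a common constant (this is where the loop-free structure of the tree and the height-function reading of $\zeta$ enter) and that the reindexing modulo $q$ is legitimate, which is what forces the hypothesis of $q$-periodicity. A secondary technical point, needed for consistency of the mixture, is to confirm that the factor $K_{\L,\Delta}$ really is $s$-free, i.e.\ that summing out a layer only ever produces the $s$-independent constants $c_{xy}$.
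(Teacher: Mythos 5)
Your mixture decomposition, the check that the peeling constants coming from \eqref{eq:bl} are $s$-independent, and the reindexing argument for $w$-independence and translation invariance are all sound, and they match the spirit of the argument in \cite{KS} (note the paper itself quotes this theorem from \cite{KS} and gives no proof, so the comparison is against that source). The genuine gap is the sentence ``since the DLR equation is affine in the measure, $\nu$ is a GGM.'' This presupposes that each pinned measure $\nu_{w,s}$ itself satisfies the gradient DLR equations, but the preceding theorem asserts only that the $\nu_{w,s}$ form a \emph{consistent family of probability measures}; in general a pinned measure is \emph{not} a gradient Gibbs measure. To see why, condition $\nu_{w,s}$ on $\mathcal{T}_\Delta^\nabla$ for a small connected $\Delta\ni w$: in the marginal on a large $\Lambda\supset\Delta$, each boundary factor reads $l\bigl(s+(\omega(u)-\omega(w))+(\omega(y)-\omega(u))\bigr)$, where $u\in\partial\Delta$ is the exit point of the path from $w$ to $y\in\partial\Lambda$. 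The increments $\omega(y)-\omega(u)$ and the differences $h_u-h_{u'}$ of the offsets $h_u=\omega(u)-\omega(w)$ are measurable with respect to the conditioning, but one overall offset $h=h_{u_0}$ is an interior degree of freedom. Hence the conditional law of the interior gradients under $\nu_{w,s}$ is proportional to $F_s(h)\prod_{b\cap\Delta\neq\emptyset}Q(\zeta_b)$ with $F_s(h)=\prod_{y}l(s+h+c_y)$, $c_y$ fixed by the conditioning, and this is generically \emph{not} the kernel $\gamma'_\Delta$ because $F_s$ is not constant. So affinity buys you nothing: you would be averaging measures that do not individually satisfy the DLR equation.

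The fix --- and the actual content of the theorem --- is that the sum over $s$ kills precisely this $h$-dependence: by $q$-periodicity of $l$, the substitution $s\mapsto s-h \pmod q$ gives $\sum_{s\in\Z_q}F_s(h)=\sum_{s\in\Z_q}\prod_{y}l(s+c_y)$, which is independent of $h$, so under the mixture the conditional distribution of the interior gradients is exactly $\gamma'_\Delta$. In other words, the reindexing trick that you deploy only for $w$-independence and translation invariance is exactly the mechanism that makes $\nu$ a GGM in the first place; the mixing over $\Z_q$ is needed for the Gibbs property itself, not merely for translation invariance (which is why the pinned measures are given a separate name and the present theorem is stated separately). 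With that conditional computation inserted, and your consistency and translation-invariance arguments kept as they are, the proof is complete.
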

\begin{rk}\label{rk}
Note that (see \cite{HKLR} and \cite{HKb})	if a height-periodic boundary law is obtained from another one by a cyclic
	shift, then it leads to the same GGM.  Moreover, the identifiability result Theorem 5 in \cite{HKb}
	guarantees that distinct (up to cyclic shift and multiplication	by positive constants) boundary laws leads to distinct GGMs.
\end{rk}
In this paper we consider the nearest-neighboring interaction  potential $\Phi=(\Phi_b)_b$, which corresponds to the SOS model and will study GGMs of this model. By Theorem \ref{KF} each $q$-height-periodic boundary law $l$ defines a translation invariant GGM. Some examples of such boundary laws already found (see \cite{HR} - \cite{KS}). In this paper our aim is to find \textit{new} $q$-height-periodic boundary laws for the SOS model and use Theorem \ref{KF} to construct GGMs corresponding to the new boundary laws. 

\subsection{The boundary law equation for the SOS model.}

In this subsection for  $\sigma:x\in V\mapsto \sigma(x)\in \mathbb Z$, consider the Hamiltonian
of the SOS model, i.e.,
\begin{equation}\label{nu1}
H(\sigma)=-J\sum_{\langle x,y\rangle: \atop x,y\in V}
|\sigma(x)-\sigma(y)|,\end{equation}
where $J>0$.

Then by (\ref{eq:bl}) the equation for translation-invariant boundary laws has the following form
\begin{equation}\label{di1}
	z_i=\left({\theta^{|i|}+
		\sum_{j\in \mathbb Z_0}\theta^{|i-j|}z_j
		\over
		1+\sum_{j\in \mathbb Z_0}\theta^{|j|}z_j}\right)^k, \quad i\in\mathbb Z_0:=\mathbb Z\setminus \{0\}.
\end{equation}
Here $\theta=\exp(-J\beta)<1$.

Let $\mathbf z(\theta)=(z_i=z_i(\theta)>0, i\in \mathbb Z_0)$ be a solution to (\ref{di1}).   Denote
\begin{equation}\label{lr}
	l_i\equiv l_i(\theta)=\sum_{j=-\infty}^{-1}\theta^{|i-j|}z_j, \quad
	r_i\equiv r_i(\theta)=\sum_{j=1}^{\infty}\theta^{|i-j|}z_j, \quad i\in\mathbb Z_0.
\end{equation}
It is clear that each $l_i$ and $r_i$ can be a finite positive number or $+\infty$.

\begin{lemma}\label{l1} \cite{HKLR} For each $i\in \mathbb Z_0$ we have
	\begin{itemize}
		\item $l_i<+\infty$ if and only if $l_0<+\infty$;
		
		\item $r_i<+\infty$ if and only if $r_0<+\infty$.
	\end{itemize}
\end{lemma}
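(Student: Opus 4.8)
The plan is to sandwich each $l_i$ between fixed positive multiples of $l_0$, up to a finite correction, using nothing more than the triangle inequality for absolute values together with the hypothesis $0<\theta<1$. Fix $i\in\mathbb Z_0$. For every integer $j$ one has the two-sided bound $\bigl||i|-|j|\bigr|\le|i-j|\le|i|+|j|$, and since $0<\theta<1$ the function $t\mapsto\theta^{t}$ is strictly decreasing, so
\begin{equation*}
\theta^{|i|}\theta^{|j|}=\theta^{|i|+|j|}\le\theta^{|i-j|}\le\theta^{\bigl||i|-|j|\bigr|}.
\end{equation*}
I will insert these inequalities into the defining series (\ref{lr}) for $l_i$ and $r_i$.

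First I treat the lower bound. Multiplying $\theta^{|i-j|}\ge\theta^{|i|}\theta^{|j|}$ by $z_j>0$ and summing over $j\le-1$ gives $l_i\ge\theta^{|i|}l_0$, i.e. $l_0\le\theta^{-|i|}l_i$; hence $l_i<+\infty$ forces $l_0<+\infty$. For the converse I split the series according to whether $|j|\ge|i|$ or $|j|<|i|$. Only finitely many indices $j\le-1$ satisfy $|j|<|i|$, and each contributes a single finite term $\theta^{|i-j|}z_j$, so their sum is finite. For the remaining indices $\bigl||i|-|j|\bigr|=|j|-|i|$, whence $\theta^{|i-j|}\le\theta^{-|i|}\theta^{|j|}$, and summing over them (then enlarging to the full series) bounds their contribution by $\theta^{-|i|}l_0$. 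Thus $l_i\le(\text{finite})+\theta^{-|i|}l_0$, so $l_0<+\infty$ implies $l_i<+\infty$. The two implications give the first equivalence, and the statement for $r_i$ follows by the identical argument with the sum taken over $j\ge1$.

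I expect no genuine obstacle: the only point needing attention is the bookkeeping of the finitely many near-diagonal terms with $|j|<|i|$, which cannot be absorbed into $\theta^{-|i|}l_0$ but are harmless since they form a finite sum of finite numbers. In fact the argument simplifies further whenever $i$ lies on the side of $0$ opposite to the summation index: for $i>0$ and every $j\le-1$ one has $|i-j|=|i|+|j|$ exactly, giving the clean identity $l_i=\theta^{i}l_0$, and symmetrically $r_i=\theta^{|i|}r_0$ for $i<0$, so in those cases the equivalence is immediate.
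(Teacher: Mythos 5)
Your proof is correct. The paper itself gives no proof of Lemma \ref{l1} --- it is imported from \cite{HKLR} --- and the argument there is the same natural two-sided geometric comparison you use, so your route is essentially the standard one and there is nothing substantively different to compare.

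One simplification worth noting: your case split into $|j|\ge|i|$ and $|j|<|i|$ is unnecessary. The reverse triangle inequality $|i-j|\ge|j|-|i|$ holds for \emph{every} $j$ (when $|j|<|i|$ it is trivial, since then $|j|-|i|<0\le|i-j|$), so $\theta^{|i-j|}\le\theta^{|j|-|i|}=\theta^{-|i|}\theta^{|j|}$ for all $j\le-1$, which gives $l_i\le\theta^{-|i|}l_0$ outright, with no finite correction term to track. Combined with your lower bound $l_i\ge\theta^{|i|}l_0$, this yields the clean sandwich $\theta^{|i|}l_0\le l_i\le\theta^{-|i|}l_0$ and both implications follow in one line; the same holds verbatim for $r_i$. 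Your closing observation that the bound becomes the exact identity $l_i=\theta^{|i|}l_0$ for $i>0$ (and $r_i=\theta^{|i|}r_0$ for $i<0$) is also correct, since in those cases $|i-j|=|i|+|j|$ for every index in the sum.
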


\begin{pro}\cite{HKLR} \label{pps: 1} 
	A vector $\mathbf z=(z_i,i\in \mathbb Z)$, with $z_0=1$,  is a solution to (\ref{di1})
	if and only if for $u_i=\sqrt[k]{z_i}$ the following holds
	\begin{equation}\label{Va}
		u_i^k=\frac{{u_{i-1}+u_{i+1}-\tau u_i}}{u_{-1}+u_{1}-\tau}, \quad i\in \mathbb Z,
	\end{equation}
	where $\tau=\theta^{-1}+\theta=2\cosh(\beta)$.
\end{pro}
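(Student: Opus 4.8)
The plan is to rewrite the fixed-point equation (\ref{di1}) in a symmetric summed form and then exploit the second-order difference relation satisfied by the kernel $\theta^{|m|}$. First I would absorb the normalisation $z_0=1$ into the two sums: since $\theta^{|i|}=\theta^{|i-0|}z_0$ and $1=\theta^{|0|}z_0$, equation (\ref{di1}) is equivalent to
\[
z_i=\left(\frac{S_i}{C}\right)^k,\qquad S_i:=\sum_{j\in\mathbb{Z}}\theta^{|i-j|}z_j,\quad C:=S_0 ,
\]
now valid for \emph{every} $i\in\mathbb{Z}$ (the case $i=0$ holding trivially, both sides being $1$). Writing $u_i=\sqrt[k]{z_i}$ and taking positive $k$-th roots, this reads simply $S_i=C\,u_i$ for all $i$.

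The key step is the three-term identity for the transfer kernel: for every $m\in\mathbb{Z}$,
\[
\theta^{|m-1|}+\theta^{|m+1|}-\tau\,\theta^{|m|}=(\theta-\theta^{-1})\,\delta_{m,0},
\]
which is an immediate case check (it vanishes for $m\neq 0$ because $1+\theta^2-\tau\theta=0$, and equals $2\theta-\tau=\theta-\theta^{-1}$ at $m=0$). Applying this termwise to $S_i=\sum_j\theta^{|i-j|}z_j$ collapses the sum to the single surviving index $j=i$, giving
\[
S_{i-1}+S_{i+1}-\tau\,S_i=(\theta-\theta^{-1})\,z_i=(\theta-\theta^{-1})\,u_i^{k}.
\]
Substituting $S_i=C\,u_i$ yields $C\,(u_{i-1}+u_{i+1}-\tau u_i)=(\theta-\theta^{-1})\,u_i^{k}$, and evaluating at $i=0$ with $u_0=1$ fixes the constant through $C(u_{-1}+u_1-\tau)=\theta-\theta^{-1}$. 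Eliminating $C$ gives exactly (\ref{Va}), which proves the forward implication.

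For the converse I would reverse these steps. Given a positive sequence $u$ with $u_0=1$ solving (\ref{Va}), set $z_i=u_i^{k}$ and $S_i=\sum_{j}\theta^{|i-j|}z_j$, whose convergence is precisely what Lemma \ref{l1} governs (finiteness of $l_0$ and $r_0$); I would adopt this as the standing hypothesis. Dividing the kernel identity $S_{i-1}+S_{i+1}-\tau S_i=(\theta-\theta^{-1})u_i^{k}$ by $\theta-\theta^{-1}$, and equation (\ref{Va}) written as $u_{i-1}+u_{i+1}-\tau u_i=\lambda\,u_i^{k}$ (with $\lambda:=u_{-1}+u_1-\tau$) by $\lambda$, shows that $S_i/(\theta-\theta^{-1})$ and $u_i/\lambda$ obey the \emph{same} inhomogeneous recurrence; hence their difference solves the homogeneous equation $x_{i-1}+x_{i+1}-\tau x_i=0$ and therefore equals $A\theta^{i}+B\theta^{-i}$ for constants $A,B$. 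The crux of the converse — and the step I expect to be the real obstacle — is to show $A=B=0$. This I would extract from the asymptotics of $S_i$ as $i\to\pm\infty$: the one-sided tails satisfy $l_i=\theta^{|i|}l_0$ for $i\ge 0$ and $r_i=\theta^{|i|}r_0$ for $i\le 0$, so the convergence guaranteed by Lemma \ref{l1} confines $S_i$ to an envelope incompatible with a nonzero growing exponential mode, forcing both $A$ and $B$ to vanish. Once $A=B=0$ we obtain $S_i=\big((\theta-\theta^{-1})/\lambda\big)u_i$ for all $i$; reading this at $i=0$ identifies the prefactor with $C=S_0$, and retracing the first paragraph returns (\ref{di1}), completing the equivalence.
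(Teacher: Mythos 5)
Your \emph{forward} direction is correct and is essentially the standard argument: the paper itself gives no proof (the proposition is quoted from \cite{HKLR}), and the proof there rests exactly on your three-term kernel identity $\theta^{|m-1|}+\theta^{|m+1|}-\tau\theta^{|m|}=(\theta-\theta^{-1})\delta_{m,0}$ together with normalisation at $i=0$; all your manipulations are legitimate because the series have positive terms and are finite whenever $z$ solves (\ref{di1}). You were also right to flag summability in the converse: without it the converse is literally false. For instance, with $k=2$, $\tau=5/2$, $u_0=1$, $u_{\pm 1}=13/10$ one has $\lambda:=u_{-1}+u_1-\tau>0$, and (\ref{Re}) then produces a positive increasing sequence with $u_{i+1}\ge \lambda u_i^2$, hence doubly exponential growth, so both series in (\ref{di1}) diverge and $z_i=u_i^2$ is not a solution of (\ref{di1}) even though (\ref{Va}) holds. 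This is precisely why the paper works under (\ref{ut}) and proves the boundedness proposition; adopting finiteness of $l_0,r_0$ as a standing hypothesis is the right reading of the statement.

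The genuine gap is the step you yourself single out: $A=B=0$ is asserted, and the facts you cite do not give it. The identities $l_i=\theta^{i}l_0$ ($i\ge0$) and $r_i=\theta^{|i|}r_0$ ($i\le0$) control only the \emph{cross-side} tails, which decay anyway; the dangerous part of $S_i$ as $i\to+\infty$ is the same-side sum $r_i=\sum_{j\ge1}\theta^{|i-j|}z_j$, and mere summability of $(\theta^{j}z_j)_j$ a priori allows $r_i$ to be of order $\theta^{-i}$ --- exactly the size of the mode $B\theta^{-i}$ you must exclude. What is needed (and is true) is the sharper statement $\theta^{i}S_i\to0$ and $\theta^{i}u_i\to0$ as $i\to+\infty$ (and the mirror statements at $-\infty$): the second follows from $\theta^{j}u_j^k\to0$, i.e.\ $u_j=o(\theta^{-j/k})$, so $\theta^{j}u_j=o(\theta^{j(1-1/k)})$; the first from $\theta^{i}r_i=\sum_{1\le j\le i}\theta^{2(i-j)}\big(\theta^{j}z_j\big)+\sum_{j>i}\theta^{j}z_j$, where both pieces tend to $0$ by splitting the convolution. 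With these limits your argument closes ($B=\lim_i\theta^{i}d_i=0$, and symmetrically $A=0$). You also divide by $\lambda$ without excluding $\lambda=0$ (under the standing hypothesis $\lambda=0$ would force $u_i=A\theta^{i}+B\theta^{-i}>0$, whose $k$-th powers are never $\theta^{|j|}$-summable on both sides unless $u\equiv0$). Finally, note there is a cleaner route that avoids the $A,B$ decomposition entirely: apply (\ref{Va}) termwise to the partial sums $S_i^{(n)}=\sum_{|j|\le n}\theta^{|i-j|}u_j^k$ and reindex; everything telescopes to
\[
\lambda S_i^{(n)}=(\theta-\theta^{-1})u_i+\theta^{\,n+i}\bigl(u_{-n-1}-\theta u_{-n}\bigr)+\theta^{\,n-i}\bigl(u_{n+1}-\theta u_n\bigr),
\]
and letting $n\to\infty$ (the boundary terms vanish by the same estimate $\theta^{n}u_{\pm n}\to0$) gives $\lambda S_i=(\theta-\theta^{-1})u_i$ for every $i$; this shows $\lambda\neq0$ automatically, yields $S_i=S_0u_i$, and hence (\ref{di1}) in one stroke.
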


In \cite{HKLR} and \cite{HR} some 4-periodic boundary laws are found.

\section{$q$-periodic solutions for $q\leq 5$}
The equation (\ref{Va}) is equivalent to the following 
\begin{equation}\label{Re}
	u_{i+1}=(u_{-1}+u_1-\tau)u_{i}^k+\tau u_{i}-
	u_{i-1}, \end{equation}
where $i\geq 0$, $u_{-1}>0$, $u_{0}>0$, $u_{1}>0$ are some initial numbers which (for $\theta<1$, i.e. $J>0$) satisfy 
\begin{equation}\label{ut}
u_0=1, \ \ u_{-1}+u_1<\tau.
\end{equation}

Note that for $i=0$ the above equation is trivially fulfilled for all values of $u_1$ and $u_{-1}$.

 {\bf The main problem} is to find $u_1$ and $u_{-1}$ such that the sequence  $\{u_i\}_{i=-1}^\infty$ generated by (\ref{Re}) is strictly positive. 
\subsection{Bounded sequence}
The aim of this subsection is to prove the following

\begin{pro} If a strictly positive sequence $\{u_i\}_{i=-1}^\infty$ 
	 satisfies (\ref{Re}), (\ref{ut}) then it is bounded as
	$$0<u_i< x_0\min\left\{{k-1\over k^{k/(k-1)}}\tau, 1\right\}, \ \ i\geq 1$$
	where 
	 $x_0=\sqrt[k-1]{\tau\over \tau-u_{-1}-u_1}$.  
\end{pro}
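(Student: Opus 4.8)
The plan is to rewrite the recurrence \eqref{Re} in the symmetric form
\[
u_{i+1}+u_{i-1}=g(u_i),\qquad g(x):=\tau x-c\,x^{k},\quad c:=\tau-u_{-1}-u_1,
\]
which holds for every $i\ge 1$ (and is the trivial identity $u_1=u_1$ at $i=0$). By \eqref{ut} we have $c>0$, so $g$ is strictly concave on $(0,\infty)$ (indeed $g''(x)=-ck(k-1)x^{k-2}<0$), vanishes at $0$, tends to $-\infty$, and is positive exactly on the interval $(0,x_0)$ with $x_0=\sqrt[k-1]{\tau/c}$. Everything then reduces to reading off two facts about $g$ from the strict positivity of the sequence.

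First I would locate the maximum of $g$. Solving $g'(x)=\tau-ck\,x^{k-1}=0$ gives the unique critical point $x_\ast=\sqrt[k-1]{\tau/(ck)}=x_0\,k^{-1/(k-1)}$, and a short computation yields
\[
M:=\max_{x>0}g(x)=g(x_\ast)=\frac{k-1}{k}\,\tau\,x_\ast=\frac{k-1}{k^{k/(k-1)}}\,\tau\,x_0 .
\]
This is precisely the first candidate bound in the statement, while the second candidate $x_0$ is the right endpoint of the positivity interval of $g$.

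Next I would exploit that all terms with index $\ge -1$ are strictly positive. Fix $i\ge 1$; then $u_{i-1}>0$ (using $u_0=1>0$ when $i=1$) and $u_{i+1}>0$, so
\[
0<u_{i-1}+u_{i+1}=g(u_i)\le M .
\]
The left inequality $g(u_i)>0$ forces $u_i<x_0$ for every $i\ge1$. The right inequality says that $g(u_i)$, being the sum of the two positive numbers $u_{i-1}$ and $u_{i+1}$, strictly dominates each of them: $u_{i-1}<M$ and $u_{i+1}<M$. Letting $i$ range over all $i\ge1$, the indices $i-1$ and $i+1$ sweep out every $j\ge 0$, whence $u_j<M$ for all $j\ge0$. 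Combining the two estimates gives $u_i<\min\{x_0,M\}=x_0\min\bigl\{\tfrac{k-1}{k^{k/(k-1)}}\tau,\,1\bigr\}$ for all $i\ge1$, which is the claim.

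The computation of $M$ and the check that it equals the stated constant is routine and is the only place where care with exponents is needed. The genuine structural point — and the step I expect to matter most — is the observation that writing \eqref{Re} as $u_{i+1}+u_{i-1}=g(u_i)$ converts strict positivity of \emph{both} neighbours into two-sided control of $g(u_i)$, simultaneously capping $u_i$ by $x_0$ (via $g(u_i)>0$) and capping each neighbour by $M$ (via $g(u_i)\le\max g$). A small bookkeeping subtlety is that the bound must already hold at $i=1$; this is automatic, since the relation at index $i$ also bounds the earlier neighbour $u_{i-1}$, so even $u_1$ and $u_0=1$ are covered, which incidentally forces the consistency constraint $M>1$ for any strictly positive solution to exist.
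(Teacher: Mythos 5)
Your proof is correct and takes essentially the same route as the paper: both rewrite (\ref{Re}) as $u_{i+1}+u_{i-1}=\psi(u_i)$ with $\psi(x)=(u_{-1}+u_1-\tau)x^k+\tau x$ (your $g$), use strict positivity of both neighbours to force $\psi(u_i)>0$, hence $u_i<x_0$, and bound each neighbour by $\max_{x>0}\psi(x)=\psi(x_*)=\frac{k-1}{k^{k/(k-1)}}\tau x_0$. The only difference is that the paper first runs a (logically redundant) contradiction argument to establish boundedness before extracting the explicit bounds, whereas you obtain both bounds directly with slightly more careful index bookkeeping --- a streamlining of the same argument, not a different method.
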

\begin{proof} Assume $u_i$ is not bounded, i.e., for any $M>0$ there exists $i$ such that $u_i>M$.
	Consider the function $\psi(x)=(u_{-1}+u_1-\tau)x^k+\tau x$.  

Since $u_{-1}+u_1<\tau$ the function $\psi(x)$ is monotone increasing in $(0, x_*)$, (where $x_*=x_0\sqrt[k-1]{1/k}$) and decreasing in $ (x_*, +\infty)$. Note that $x_0>x_*$, $\psi(x_0)=0$  and $\psi(x)<0$ for all $x>x_0$. Therefore, for $u_i>M>x_0$ by (\ref{Re}) and conditions of proposition we have  
$$	0<u_{i+1}+u_{i-1}=(u_{-1}+u_1-\tau)u_{i}^k+\tau u_{i}<
(u_{-1}+u_1-\tau)M^k+\tau M<0.$$
This is a contradiction. Hence $u_i$ must be bounded.
The lower bound is 0 (by the assumption of the proposition) and the upper bound is obtained as
$$0<u_{i-1}<u_{i+1}+u_{i-1}=\psi(u_i)\leq \max_{x>0} \psi(x)=\psi(x_*)=x_0{k-1\over k^{k/(k-1)}}\tau$$ and $\psi(u_i)>0$ iff $0<u_i<x_0$.
\end{proof}
	
\subsection{Periodic solutions}
\begin{defn} 
	
\begin{itemize}
	\item A sequence $\{u_i\}_{i=-1}^\infty$ is called  $q$-periodic if 
	$$u_{i+q}=u_i \ \ \mbox{for some} \ \ q\geq 1 \ \ \mbox{and for all} \ \ i\geq -1;$$ 
	\item A $q$-periodic sequence is called  mirror symmetric if 
	$$u_{i}=u_{q-i}, \ \ \mbox{for all} \ \ i=-1, 0, 1, \dots, [q/2].$$  
	\item A $q$-periodic sequence is called two-mirror symmetric if there exists $p$ such that $p<q$ and 
	$$u_i=u_{p-i}, \ \ i=-1, 0, 1, \dots, [p/2];$$
	$$u_{p+j}=u_{q-j}, \ \ j=0, 1, \dots, [(q-p)/2].$$
\end{itemize}
\end{defn}
\begin{pro}\label{pu} Let $k\geq 2$ and $q\geq 1$. Any $q$-periodic sequence $\{u_i\}_{i=-1}^\infty$ with $u_{-1}=u_1$,  $u_0=1$,  satisfying (\ref{Re}), is mirror symmetric too.
\end{pro}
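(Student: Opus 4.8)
The plan is to exploit the reversal symmetry of the recurrence (\ref{Re}). Under the hypothesis $u_{-1}=u_1$ the relation (\ref{Re}) can be written in the symmetric three-term form
\[
u_{i+1}+u_{i-1}=(2u_1-\tau)\,u_i^{\,k}+\tau u_i,
\]
in which the coefficient $2u_1-\tau=u_{-1}+u_1-\tau$ is fixed once and for all and the right-hand side depends only on $u_i$. The key observation is that this identity is invariant under the index reversal $i\mapsto -i$: it couples the index $i$ to its neighbours only through the symmetric pair $\{u_{i-1},u_{i+1}\}$ together with $u_i$.

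First I would extend $\{u_i\}$ to a $q$-periodic sequence indexed by all of $\Z$, which is harmless since the periodicity $u_{i+q}=u_i$ already holds for $i\ge -1$. I would then check that the symmetric recurrence above, established by (\ref{Re}) for $i\ge 0$ (the case $i=0$ being the trivial identity noted after (\ref{Re})), in fact holds for every $i\in\Z$: the constraint at index $i$ only involves $u_{i-1},u_i,u_{i+1}$, so periodicity transports it from the block $i=1,\dots,q$ --- which meets every residue class --- to all integers. Setting $w_i:=u_{-i}$, the reversal invariance then shows that $\{w_i\}$ satisfies the same recurrence, with the very same coefficient $u_{-1}+u_1-\tau$, since $w_{-1}+w_1-\tau=u_1+u_{-1}-\tau$.

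It remains to match initial data and invoke uniqueness. One has $w_0=u_0=1$ and, crucially, $w_1=u_{-1}=u_1$ by hypothesis, so $w$ and $u$ agree at the two consecutive indices $i=0$ and $i=1$. Because the second-order recurrence determines $u_{i+1}$ uniquely from $u_{i-1}$ and $u_i$, forward iteration gives $w_i=u_i$ for every $i\ge 0$, that is $u_{-i}=u_i$ for all $i\ge 0$. Finally, $q$-periodicity yields $u_{-i}=u_{q-i}$, whence $u_i=u_{q-i}$ for $i=-1,0,1,\dots,[q/2]$, which is exactly mirror symmetry.

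The argument is elementary once the symmetry is in place; the only point demanding care --- the main (and rather minor) obstacle --- is justifying that the recurrence, stated in (\ref{Re}) only for nonnegative $i$, genuinely extends to negative indices through periodicity, for it is precisely the recurrence at negative indices that certifies that the reversed sequence $w$ solves (\ref{Re}) and hence that the uniqueness argument applies.
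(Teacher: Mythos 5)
Your proof is correct, and every step holds up: the extension of (\ref{Re}) to all of $\Z$ via periodicity (the point you rightly flag as the one needing care) works exactly as you say, since the constraint at index $i$ involves only the triple $u_{i-1},u_i,u_{i+1}$, which by periodicity coincides with a triple at an index where (\ref{Re}) is assumed. The route differs from the paper's in packaging rather than in substance, but the difference is worth noting. The paper proves the identity $u_i=u_{q-i}$ by a direct two-sided induction: it writes out (\ref{Re}) at index $i$ and at index $q-i$ simultaneously (the system (\ref{ms}), (\ref{mss})), checks the base cases $i=0,1,2$ by hand using $u_{q-1}=u_{-1}=u_1$, and subtracts to get $u_{i+1}=u_{q-i-1}$. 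You instead factor the same computation through two general principles: the reversal invariance of the symmetric second-order recurrence, and uniqueness of the solution of a forward second-order recursion given two consecutive values. These are the same induction in disguise --- by periodicity the index $q-i$ \emph{is} the index $-i$, so the paper's pairing of equations at $(i,q-i)$ is precisely your uniqueness argument comparing $u$ with $w_i=u_{-i}$ --- but your framing makes the mechanism transparent: mirror symmetry is forced exactly because the initial data $(u_0,u_1)=(w_0,w_1)$ match, i.e.\ because $u_{-1}=u_1$. What the paper's version buys in exchange is that it never needs to leave the index window $\{0,1,\dots,q\}$, so the extension of (\ref{Re}) to negative indices --- the one auxiliary lemma your argument requires --- never arises.
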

\begin{proof} Any $q$-periodic sequence $\{u_i\}_{i=-1}^\infty$ with $u_{-1}=u_1$ and $u_0=1$ is in the form
$$u_{1}, 1, u_1, u_2, \dots, u_{q-1}=u_{1}, u_{q}=1, u_{q+1}=u_1, u_{q+2}=u_2, \dots$$
Using mathematical induction we prove that
\begin{equation}\label{uq}
	u_{i}=u_{q-i}, \ \ \mbox{for all} \ \ i= 0, 1, \dots, [q/2].
\end{equation}

By (\ref{Re}) we have 
\begin{equation}\label{ms}
	\begin{array}{llllll}
u_2=(2u_1-\tau)u_{1}^k+\tau u_{1}-1\\[2mm]
u_3=(2u_1-\tau)u_{2}^k+\tau u_{2}-u_1\\[2mm]
\vdots \\[2mm]
u_{q-2}=(2u_1-\tau)u_{q-3}^k+\tau u_{q-3}-u_{q-4}\\[2mm]
u_{1}=u_{q-1}=(2u_1-\tau)u_{q-2}^k+\tau u_{q-2}-u_{q-3}\\[2mm]
1=u_{q}=(2u_1-\tau)u_{1}^k+\tau u_{1}-u_{q-2}.\\[2mm]
\end{array}
\end{equation}
We know that for $i=0$ the equation (\ref{Re}) is satisfied for all values of $u_1$ and $u_{-1}$. By assumption $u_{-1}=u_1$ and $q$-periodicity we have $u_1=u_{q-1}$. Moreover, from the first and last equations of the system (\ref{ms}) we get $u_{q-2}=u_2$. Thus the equality (\ref{uq}) is true for $i=0,1,2$. Assume now that it is true for any $i< [q/2]$. We show it for $i+1$:
by (\ref{ms}) we have 
\begin{equation}\label{mss}
	\begin{array}{ll}
		u_{i+1}=(2u_1-\tau)u_{i}^k+\tau u_{i}-u_{i-1},\\[2mm]
u_{q-i+1}=(2u_1-\tau)u_{q-i}^k+\tau u_{q-i}-u_{q-i-1}.
\end{array}
\end{equation}
By assumption of the induction we have $u_{i}=u_{q-i}$ and $u_{i-1}=u_{q-i+1}$, Therefore (\ref{mss}) gives $u_{i+1}=u_{q-i-1}.$ This completes the proof.
\end{proof}
\begin{pro}\label{po} Let $k\geq 2$ and $q\geq 1$. If $\{u_i\}_{i=-1}^\infty$ is a $q$-periodic sequence  with $u_{-1}\ne u_1$, $u_0=1$,  satisfying (\ref{Re}), and $u_p=1$, $u_{p+1}=u_{-1}$ for some $p<q$ then it is two-mirror symmetric.
\end{pro}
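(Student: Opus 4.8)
The plan is to exploit the fact that the recurrence (\ref{Re}) can be written in the symmetric form
$$u_{i+1}+u_{i-1}=(u_{-1}+u_1-\tau)u_i^k+\tau u_i,$$
in which $u_{i+1}$ and $u_{i-1}$ enter symmetrically. Writing $A:=u_{-1}+u_1-\tau$ and $f(x):=Ax^k+\tau x$, the relation reads $u_{i+1}+u_{i-1}=f(u_i)$, and since the sequence is $q$-periodic it extends to a bi-infinite sequence $(u_i)_{i\in\Z}$ satisfying $u_{i+1}+u_{i-1}=f(u_i)$ for every $i\in\Z$ (at $i=0$ this is the trivial identity $u_1+u_{-1}=A+\tau$). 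The two structural facts I would record first are: (i) \emph{reflection invariance}: for any constant $c$ the reflected sequence $v_i:=u_{c-i}$ again satisfies $v_{i+1}+v_{i-1}=f(v_i)$ with the same $A$ and $\tau$; and (ii) \emph{rigidity}: because $f(u_i)-u_{i-1}$ determines $u_{i+1}$ and $f(u_i)-u_{i+1}$ determines $u_{i-1}$, any two solutions of $w_{i+1}+w_{i-1}=f(w_i)$ that agree at two consecutive indices must agree everywhere.

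For the first mirror symmetry I would apply this with $c=p$, i.e. set $v_i:=u_{p-i}$. Then $v_0=u_p=1=u_0$ and, using the hypothesis $u_{p+1}=u_{-1}$, also $v_{-1}=u_{p+1}=u_{-1}$. Thus $v$ and $u$ agree at the two consecutive indices $-1$ and $0$, so by rigidity $v_i=u_i$ for all $i$, which is exactly $u_i=u_{p-i}$; restricting to $i=-1,0,\dots,[p/2]$ gives the first batch of identities in the definition of two-mirror symmetry.

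For the second mirror symmetry I would instead take $c=p+q$ and set $v'_i:=u_{p+q-i}$. Here $v'_q=u_p=1=u_q$ (the last equality by $q$-periodicity), while $v'_{q+1}=u_{p-1}$ and, by $q$-periodicity, $u_{q+1}=u_1$; the identity $u_{p-1}=u_1$ already proved in the previous step (taking $i=1$ there) then gives $v'_{q+1}=u_{q+1}$. Again $v'$ and $u$ agree at two consecutive indices, so $v'_i=u_i$ for all $i$, i.e. $u_{p+q-i}=u_i$; setting $i=p+j$ yields $u_{p+j}=u_{q-j}$ for $j=0,1,\dots,[(q-p)/2]$, completing the two-mirror symmetry.

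The routine parts are the two inductions hidden inside the rigidity statement (propagation forward and backward from two matching terms). The one point that needs care, and which I regard as the main thing to get right, is the passage to the bi-infinite periodic sequence and the verification that the symmetric relation $u_{i+1}+u_{i-1}=f(u_i)$ genuinely holds for \emph{all} $i\in\Z$ (so that the reflections about $p/2$ and about $(p+q)/2$ are legitimate), together with checking that the supplied data $u_p=1$, $u_{p+1}=u_{-1}$ really produce matches at two \emph{consecutive} indices rather than at a single one. Note that the hypothesis $u_{-1}\ne u_1$ is not actually used in the argument; it only serves to separate this case from the purely mirror-symmetric one of Proposition \ref{pu}.
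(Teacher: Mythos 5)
Your proof is correct and is essentially the paper's own argument in cleaner packaging: your ``rigidity'' lemma is exactly the paper's forward/backward induction using (\ref{Re}) and its backward form (\ref{Reb}), and your two reflections about $p$ and $p+q$ correspond to the paper's two inductions, whose base cases $u_{p-1}=u_1$ and $u_{p+2}=u_{q-2}$ are just the first propagation steps from your matched pairs $(-1,0)$ and $(q,q+1)$. The only presentational differences are that you make the bi-infinite periodic extension explicit (the paper works only with indices $i\geq -1$) and that you correctly note the hypothesis $u_{-1}\neq u_1$ is never used, serving only to separate this case from Proposition \ref{pu}.
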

\begin{proof} Under conditions of the proposition the $q$-periodic sequence $\{u_i\}_{i=-1}^\infty$ is in the form
	$$u_{-1}, 1, u_1, u_2, \dots, u_{p-1}, u_{p}=1, u_{p+1}=u_{-1}, u_{p+2}, \dots u_{q-1}=u_{-1}, u_q=1, u_{q+1}=u_1, \dots$$
Note that the recurrence equation (\ref{Re}) 	is given in "forward" form. This equation can be written as a "backward" form: 
	\begin{equation}\label{Reb}
		u_{i-1}=(u_{-1}+u_1-\tau)u_{i}^k+\tau u_{i}-
		u_{i+1}. \end{equation}
	For $i=p$ from (\ref{Reb}), by conditions of the proposition, we get
	$$u_{p-1}=(u_{-1}+u_1-\tau)u_{p}^k+\tau u_{p}-
	u_{p+1}=(u_{-1}+u_1-\tau)+\tau -u_{-1}=u_1.$$
	Now using mathematical induction one can prove that 
\begin{equation}\label{pp}	u_i=u_{p-i}, \ \ i=0, 1, \dots, [p/2].
	\end{equation}
By conditions $u_{p}=1$, $u_{p+1}=u_{-1}$ and equation (\ref{Re}) for $i=p+2$ we get 
$$u_{p+2}=(u_{-1}+u_1-\tau)u_{p+1}^k+\tau u_{p+1}-
u_{p}=(u_{-1}+u_1-\tau)u_{-1}^k+\tau u_{-1} -1$$
and using equation (\ref{Reb}) and $q$-periodicity of the sequence we obtain
$$u_{q-2}=(u_{-1}+u_1-\tau)u_{q-1}^k+\tau u_{q-1}-
u_{q}=(u_{-1}+u_1-\tau)u_{-1}^k+\tau u_{-1} -1.$$
Hence, $u_{p+2}=u_{q-2}$. Again using the mathematical induction one shows 
	\begin{equation}\label{ppp}u_{p+j}=u_{q-j}, \ \ j=0, 1, \dots, [(q-p)/2].
	\end{equation}

\end{proof}
Keeping in mind the equalities (\ref{uq}), (\ref{pp}), (\ref{ppp}) from Propositions \ref{pu} and \ref{po} we obtain the following

\begin{cor}
If a $q$-periodic sequence is a solution to (\ref{Re}) and it is a mirror (or two-mirror) symmetric then the sequence consists of up to $[q/2]$ distinct elements (including $u_0=1$).   
\end{cor}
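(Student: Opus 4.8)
The plan is to reduce everything to a counting argument, since Propositions \ref{pu} and \ref{po} have already done the analytic work of establishing the reflection identities (\ref{uq}), (\ref{pp}) and (\ref{ppp}). First I would observe that, because the sequence $\{u_i\}$ is $q$-periodic, the set of values occurring in the whole sequence coincides with the set of values in a single period, i.e.\ with $\{u_0,u_1,\dots,u_{q-1}\}$ (and $u_{-1}=u_{q-1}$ contributes nothing new). Thus the task is purely to bound the number of distinct entries among $u_0,\dots,u_{q-1}$, and the symmetry relations are exactly the device that collapses this list onto its first half.

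For the mirror-symmetric case I would invoke (\ref{uq}): the identity $u_i=u_{q-i}$ pairs the index $i$ with $q-i$, that is, identifies $i$ with $-i$ modulo $q$. Every index $j$ with $[q/2]<j\le q-1$ satisfies $q-j\in\{1,\dots,[q/2]\}$, so $u_j$ already occurs among $u_0,u_1,\dots,u_{[q/2]}$. Hence the distinct values are exhausted by this initial block, indexed by $i=0,1,\dots,[q/2]$ and containing the normalisation $u_0=1$; counting the free indices $1,\dots,[q/2]$ and adjoining $u_0=1$ (treating the parities $q=2m$ and $q=2m+1$ separately, with self-paired indices $i=0$ and, when $q$ is even, $i=q/2$) yields the asserted bound of $[q/2]$ distinct elements.

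For the two-mirror-symmetric case I would split one period at the site $p$. The relation (\ref{pp}) folds the block $u_0,\dots,u_p$ onto the segment $u_0,\dots,u_{[p/2]}$, while (\ref{ppp}) folds the block $u_p,\dots,u_q$ onto $u_p,\dots,u_{p+[(q-p)/2]}$. Because $u_0=u_p=1$ by the hypotheses of Proposition \ref{po} (and $u_q=u_0$ by periodicity), the two folded lists share the value $1$ at their junction; adding the two half-counts and subtracting this single overlap leaves the distinct values confined to roughly $[p/2]+[(q-p)/2]$ free entries together with $u_0=1$, and since $[p/2]+[(q-p)/2]\le[q/2]$ this collapses to the count claimed in the corollary.

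The step I expect to be the only genuine obstacle is the bookkeeping of parities and of the overlap at the junction $i=p$: one must ensure that the self-paired central index (in the mirror case) and the shared endpoint value $u_0=u_p=1$ (in the two-mirror case) are each counted exactly once, so that the two half-blocks fit together without over-counting. No idea beyond the reflection identities already proved is required; the whole content of the corollary is that those identities leave only the ``first half'' of each reflected block free to vary.
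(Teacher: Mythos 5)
Your proposal is correct and takes essentially the same route as the paper: the paper gives no separate argument for the corollary, deducing it directly from the reflection identities (\ref{uq}), (\ref{pp}), (\ref{ppp}) of Propositions \ref{pu} and \ref{po}, which is exactly the folding-and-counting you spell out (including the observation $[p/2]+[(q-p)/2]\le [q/2]$ for the two-mirror case). One remark: your tally of ``free indices $1,\dots,[q/2]$ plus $u_0=1$'' actually yields $[q/2]+1$ distinct values (as the paper's own examples for $q=4,5$ confirm), but this off-by-one looseness is already present in the corollary's phrasing itself, so it is not a discrepancy between your argument and the paper's.
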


For $q=1,2,3,4,5$ we separately study $q$-periodic solutions of (\ref{Re}).

\subsubsection{Case $u_{-1}=u_1$.}
In this subsection we consider initial values $u_{-1}=u_1$, $u_0=1$.
\
{\it Sub-case:} $q=1$. In this case the only solution is 
$$u_{i}=1, \ \ i=-1,0,1,\dots$$

{\it  Sub-case:} $q=2$ in this case the sequence is
$$u_{1}, 1,  u_1, 1, u_1, 1, \dots$$
Full description of such sequences is given in Proposition 4.5 of \cite{HKLR}. 

{\it  Sub-case:} $q=3$. The 3-periodic, mirror symmetric sequence has the form
$$u_{1}, 1,  u_1, u_1, 1, u_1, u_1, 1, \dots$$
where $x:=u_1$ satisfies
\begin{equation}\label{x3}
	2x^{k+1}-\tau x^k+(\tau-1)x-1=0.
\end{equation}
Following \cite{HKLR} (see pages 20-21), we note that this equation has the solution $x=1$ independently of the parameters $\tau$, $k$. 
Denote
$$\tau_0:={2k+1\over k-1}.$$

The following lemma gives the full analysis of the equation (\ref{x3}):

\begin{lemma}\label{l6} \cite{HKLR} For each $k\geq 2$,
	there is exactly one critical value of $\tau=2\cosh(\beta)$, called $\tau_c=\tau_c(k)$, such that
	\begin{itemize}
		\item[1.] $\tau_c<\tau_0$;
		\item[2.]  if $\tau<\tau_c$ then (\ref{x3}) has unique  positive solution;
		\item[3.] if $\tau=\tau_c$ then the equation has two positive solutions;
		\item[4.] if $\tau>\tau_c$, $\tau\ne \tau_0$
		then it has exactly three solutions;
		\item[5.] if $\tau=\tau_0$, then the equation has two solutions.
		
		In each case one of solutions  is $x=1$.
	\end{itemize}
\end{lemma}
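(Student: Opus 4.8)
The plan is to analyze the polynomial

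\begin{equation*}
f(x)=2x^{k+1}-\tau x^k+(\tau-1)x-1
\end{equation*}

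by first extracting the known root and then counting the remaining positive roots as $\tau$ varies. Since $f(1)=2-\tau+\tau-1-1=0$ for all $\tau$ and $k$, I would factor $f(x)=(x-1)g(x)$ where $g$ is a degree-$k$ polynomial; this reduces the claim about two or three positive solutions of $f$ to a claim about whether $g$ has zero, one, or two positive roots distinct from accounting for the factor. The divisibility by $(x-1)$ lets me perform the polynomial division explicitly and obtain $g(x)=2x^k-(\tau-2)x^{k-1}-(\tau-2)x^{k-2}-\dots-(\tau-2)x+1$ (the exact intermediate coefficients are routine to compute), whose positive-root structure is the real content.

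The key step is to understand the positive roots of $g$. My approach would be to reparametrize: instead of studying $g(x)=0$ directly, I would isolate $\tau$ as a function of $x$ from the equation $f(x)=0$ with $x\ne 1$, writing $\tau=\tau(x)$. Solving $f(x)=0$ for $\tau$ gives $\tau(x)=\frac{2x^{k+1}-x-1}{x^k-x}=\frac{2x^{k+1}-x-1}{x(x^{k-1}-1)}$. Then the number of positive solutions $x\ne 1$ of \eqref{x3} at a given $\tau$ equals the number of times the horizontal line at height $\tau$ meets the graph of $\tau(x)$ on $(0,\infty)\setminus\{1\}$. I would analyze $\tau(x)$ by computing its limiting behavior as $x\to 0^+$, $x\to 1$, and $x\to+\infty$, and then studying its monotonicity via $\tau'(x)$. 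The critical value $\tau_c$ should emerge as the value of $\tau(x)$ at its unique local extremum (a minimum on one of the branches), and $\tau_0=\frac{2k+1}{k-1}$ should appear as the limiting value $\lim_{x\to 1}\tau(x)$, which explains the degenerate count at $\tau=\tau_0$: the line $\tau=\tau_0$ is tangent to, or passes through, the removable singularity at $x=1$.

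The main obstacle will be establishing that $\tau(x)$ has exactly one interior critical point on the relevant branch, so that the graph is unimodal and the counting gives precisely the stated thresholds (one solution below $\tau_c$, two at $\tau_c$, three above except at $\tau_0$, two at $\tau_0$). This amounts to showing $\tau'(x)$ has a single sign change, which reduces after clearing denominators to showing that a certain auxiliary polynomial has exactly one positive root; I expect to control its sign using Descartes' rule of signs or a convexity argument on $g$, together with the inequality $\tau_c<\tau_0$ following by comparing the extremal value with the limit at $x=1$. Since this lemma is attributed to \cite{HKLR}, I would also note that one may simply invoke their detailed analysis on pages 20--21, but the self-contained route above via the function $\tau(x)$ is the natural way to reprove it.
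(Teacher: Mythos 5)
The paper itself contains no proof of this lemma: it is quoted from \cite{HKLR} (with a pointer to pp.~20--21 of that reference), so the only internal ``proof'' is the citation. Your route is, in substance, the proof given in that reference, and it is complete in all but one step, which does close exactly as you predict. After discarding the root $x=1$ and solving (\ref{x3}) for $\tau$, the common factor $x-1$ cancels from your expression to give
\begin{equation*}
\tau(x)=\frac{2x^{k+1}-x-1}{x^{k}-x}=2+\frac{2x^{k}+1}{x+x^{2}+\dots+x^{k-1}},\qquad x>0,
\end{equation*}
a function that is real-analytic on all of $(0,\infty)$ with $\tau(1)=\tau_0$; this is precisely the reparametrisation used in \cite{HKLR}. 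The step you flag as the main obstacle --- that $\tau'(x)$ changes sign exactly once --- is settled by the tool you name: clearing denominators, the numerator of $\tau'(x)$ is
$P(x)=\sum_{j=1}^{k-1}\bigl(2(k-j)x^{k+j-1}-jx^{j-1}\bigr)$,
whose coefficients, read by increasing degree, are negative for degrees $0,\dots,k-2$ and positive for degrees $k,\dots,2k-2$, i.e.\ there is exactly one sign change; by Descartes' rule $P$ has a unique positive root $x^{*}$, so $\tau$ is strictly decreasing on $(0,x^{*})$ and strictly increasing on $(x^{*},\infty)$. Since $P(1)=k(k-1)/2>0$, the minimiser satisfies $x^{*}<1$, hence $\tau_c:=\tau(x^{*})<\tau(1)=\tau_0$, which is item 1 (and for $k=2$ one recovers $\tau_c=2(1+\sqrt{2})$). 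Your level-set count then yields items 2--5 verbatim: for $\tau<\tau_c$ the horizontal line misses the graph, leaving only $x=1$; at $\tau=\tau_c$ it meets it once; for $\tau>\tau_c$ it meets it exactly twice, both points being different from $1$ unless $\tau=\tau_0$, in which case one of the two points is the excluded point $x=1$ and only two solutions remain. So your proposal is correct, it coincides with the cited proof rather than offering a genuinely different one, and the gap you acknowledged is fillable exactly by the Descartes argument you anticipated.
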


 {\it  Sub-case:} $q=4$. The 4-periodic, mirror symmetric sequence has the form
 $$u_{1}, 1,  u_1, u_2, u_1, 1, u_1, u_2, u_1, 1, \dots$$
 where $x:=u_1$, $y:=u_2$ satisfy the following system
\begin{equation}\label{xyu} \begin{array}{ll}
x=(2x-\tau)y^k+\tau y-x,\\[2mm]
y=(2x-\tau)x^k+\tau x-1.
\end{array}
\end{equation}
Note that $y=1$ is a solution of the first equation of (\ref{xyu}), independently on values of $x$. But this solution gives a 2-periodic sequence mentioned above. Therefore, we assume $y\ne 1$ and define 
\begin{equation}\label{fu}
	f(y)={\tau y(1+y+\dots+y^{k-2})\over 2(1+y+\dots+y^{k-1})}, \ \ g(x)=(2x-\tau)x^k+\tau x-1.
	\end{equation}
Then the system (\ref{xyu}) can be written as 
\begin{equation}\label{fg} 
	x=f(g(x)).
	\end{equation}
It seems difficult to solve this equation for $k\geq 3$. Here we give all solutions in the case $k=2$. By a computer analysis one can find explicitly the following positive solutions of (\ref{fg}) for $k=2$:
$$\begin{array}{ll}
	x_1=x_1(\tau)={1\over 4}\left(\tau+\sqrt{\tau^2-4\tau+ 4\sqrt{\tau^2-4\tau}}\right), \\[2mm] 
x_2=x_2(\tau)={1\over 4}\left(\tau-\sqrt{\tau^2-4\tau+ 4\sqrt{\tau^2-4\tau}}\right),
\end{array} \, \mbox{for} \, \tau\geq 4;\ \
 \, \mbox{and}$$
\begin{equation}\label{chor} 
 \begin{array}{ll}
x_3=x_3(\tau)={1\over 4}\left(\tau+\sqrt{\tau^2-4\tau- 4\sqrt{\tau^2-4\tau}}\right), \\[2mm] 
x_4=x_4(\tau)={1\over 4}\left(\tau-\sqrt{\tau^2-4\tau -4\sqrt{\tau^2-4\tau}}\right),
\end{array} \, \mbox{for} \, \tau\geq 2(1+\sqrt{5}).
\end{equation}
It is easy to check that $2x_i-\tau<0$, $i=1,2,3,4$.  
Moreover, the corresponding $y_i=g(x_i)>0,$ $i=1,2,3,4$ (see Fig. \ref{gxi}). 
\begin{figure}[h!]
	\includegraphics[width=9cm]{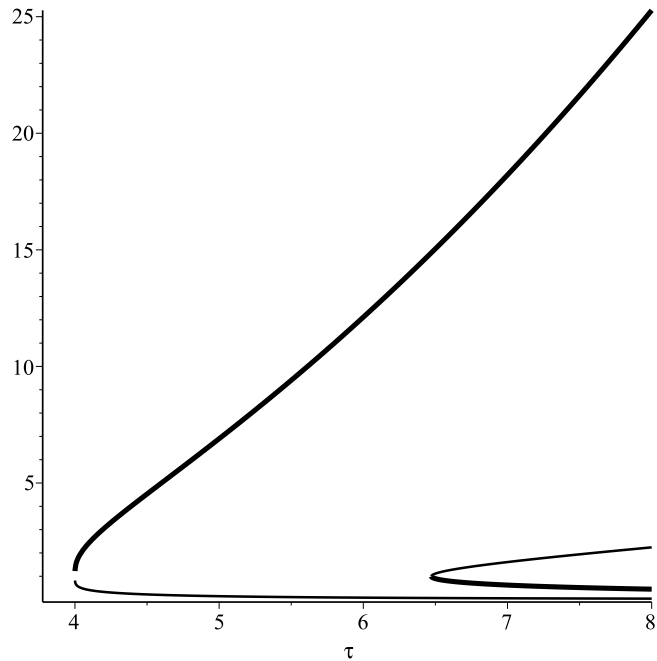}
	\caption{ The graphs of the function $g(x_1(\tau))$ (bold curve on $[4, +\infty]$); $g(x_2(\tau))$ (thin curve on $[4, +\infty]$, $g(x_3(\tau))$ (thin curve on $[2(1+\sqrt{5}), +\infty]$); $g(x_4(\tau))$ (bold curve on $[2(1+\sqrt{5}), +\infty]$ ).}\label{gxi}
\end{figure}

 {\it  Sub-case:} $q=5$. The 5-periodic, mirror symmetric sequence has the form
$$u_{1}, 1,  u_1, u_2, u_2, u_1, 1, u_1, u_2, u_2, u_1, 1, \dots$$
where $x:=u_1$, $y:=u_2$ satisfy the following system
\begin{equation}\label{xy5} \begin{array}{ll}
		x=(2x-\tau)y^k+\tau y-y,\\[2mm]
		y=(2x-\tau)x^k+\tau x-1.
		\end{array}
\end{equation}
Note that $x=y=1$ is a solution of this system, which is known in cases $q=1,2,3,4$ too. Introduce 
$$\phi(x)=(2x-\tau)(g(x))^k+(\tau -1)g(x), $$
where $g(x)$ is defined in (\ref{fu}). Thus we have to find positive fixed points of $\phi(x)$. In general, it is difficult to solve the equation $x=\phi(x)$, which is polynomial equation of degree $k^2+k+1$. Even for the case $k=2$ the equation has degree 7. One solution is $x=1$. Following Fig. \ref{tau8} and Fig. \ref{g8}, for $\tau=8$, one can see that the equation may have maximum (i.e. seven) number of positive solutions. 

Thus for $k=2$, $\tau=8$ there are exactly seven 5-periodic sequences (one of them is 1-periodic, corresponding to $x=y=1$.)
 \begin{figure}[h!]
 	\includegraphics[width=8cm]{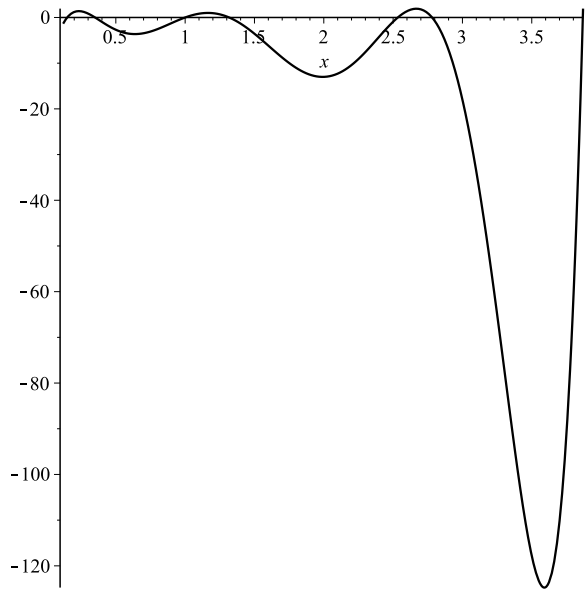}
 	\caption{ The graph of the function $\phi(x)-x$ for the case $\tau=8$, and $x\in (0.12, 3.87)$. Shown all seven roots. }\label{tau8}
 \end{figure}
\begin{figure}[h!]
	\includegraphics[width=8cm]{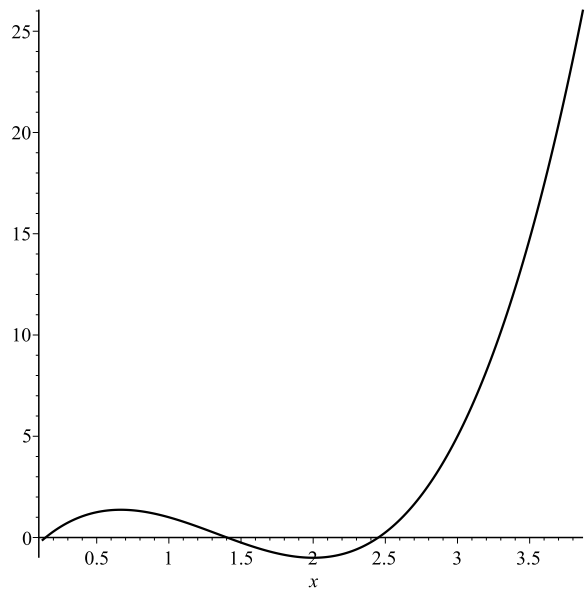}
	\caption{ The graph of the function $g(x)$ for the case $\tau=8$, and $x\in (0.12, 3.87)$. Showing that at each  root mentioned in Fig.\ref{tau8}, the corresponding value of $y=g(x)$ is positive.}\label{g8}
\end{figure}

\subsubsection{Case $u_{-1}\ne u_1$.}
In this case a $q$-periodic sequence, satisfying (\ref{Re}), may exist only for $q\geq 3$.

{\it Sub-case:}  $q=3$. In this sub-case a $3$-periodic, but non-mirror symmetric sequence has the form:
$$u_{-1}, 1, u_1, u_{-1}, 1, u_1, u_{-1}, 1, u_1,\dots$$
where $x:=u_{-1}$, $y:=u_1$ satisfy the following system
\begin{equation}\label{xyy} \begin{array}{ll}
		x=(x+y-\tau)y^k+\tau y-1,\\[2mm]
		y=(x+y-\tau)x^k+\tau x-1.
	\end{array}
\end{equation}
In this system $x$ and $y$ are symmetrical and by our assumption 
 $x\ne y$. In system (\ref{xyy}), $y=1$ (resp. $x=1$) is a solution for the first (resp. second) equation independently, on value of $x$ (resp. $y$). For $y=1$ from the second equation of the system we get 
 \begin{equation}\label{x3a}
 	x^{k+1}-(\tau-1) x^k+\tau x-2=0.
 \end{equation}
 Denoting $t=1/x$ from the equation (\ref{x3a}) we get the equation (\ref{x3}) with respect to $t$. Thus Lemma \ref{l6} is true for solutions $t$, giving $x=1/t$. Consequently, 
 the equation (\ref{x3a}) has up to three solutions one of which is 1. Similarly, taking $x=1$ from the first equation of (\ref{x3}) we get up to two (except 1) solutions.
 Thus if $x=1$ or $y=1$ we have that the system (\ref{xyy}) has up to 5 solutions in the form:
\begin{equation}\label{5x}
(1,1), (1,x_1) , (1,x_2), (x_1,1), (x_2,1).\end{equation}

For the case $k=2$ we have explicit values of $x_1$ and $x_2$: 
 $$x_{1,2}= {1\over 2}\left(\tau-2\pm \sqrt{\tau^2-4\tau-4}\right), \ \ \tau\geq 2(1+\sqrt{2}).$$

 Assume now that $x\ne 1$ and $y\ne 1$ with $x\ne y$. Then for $k=2$, from the first equation of the system (\ref{xyy}) we get 
$$ x=(x+y-\tau)y^2+\tau y-1
 \Rightarrow  x(1-y^2)=y^3-1+\tau y(1-y)
 \Rightarrow x(1+y)=-(1+y+y^2)+\tau y.
 $$
 Similarly from the second equation we get 
 $y(1+x)=-(1+x+x^2)+\tau x$. Subtracting from the first of these equations the second one, we obtain 
 $x+y=\tau$.   
 
 By similar computations, for $k=3$ and $k=4$,  one gets the following equations:
 $$(x+y)^2-\tau(1+x+y)+x+y=0, \ \ \mbox{for} \ \ k=3.$$ 
 $$(x+y-\tau)(x^2+xy+y^2+x+y+1)=0, \ \ \mbox{for} \ \ k=4.$$ 
Each of these equations has unique positive solution $x+y=\tau$, which is impossible because of our assumption $x+y<\tau$. Thus in case $k=2$, $k=3$ and $k=4$ there are only solutions (\ref{5x}).
 
{\it Sub-case:}  $q=4$. One possibility is 
\begin{equation}\label{up}
	u_{-1}, 1, u_1, 1, u_{-1}, 1, u_1, \dots.
	\end{equation}
In \cite{HKLR}, for $k=2$, this kind of 4-periodic sequences are described. In \cite{HR}, the case $k\geq 3$ is studied, in particular, the following theorem is proved:
\begin{thm} For the SOS model  on the Cayley tree of order $k=3$
	there are critical values $\tau^{(1)}_{\mathrm{cr}}\approx 3.13039$,  $\tau^{(2)}_{\mathrm{cr}}\approx 4,01009$ such that the following assertions hold
	\begin{enumerate}
		\item If $\tau \leq \tau_{\mathrm{cr}}^{(1)}$ then there is precisely one GGM associated to a boundary law of the type \eqref{up}.
		\item If $\tau\in (\tau_{\mathrm{cr}}^{(1)},4]$ then there are precisely two such GGMs.
		\item If $\tau\in (4, \tau_{\mathrm{cr}}^{(2)}]\cup\{3\sqrt{2}\}$  then there are at most three such GGMs.
		\item If $\tau\in (\tau_{\mathrm{cr}}^{(2)}, +\infty)\setminus\{3\sqrt{2}\}$  then there are at
		most four such measures associated to boundary laws of the type \eqref{up}.
	\end{enumerate}
\end{thm}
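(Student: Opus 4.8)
The plan is to convert the $4$-periodicity of the ansatz \eqref{up} into a finite polynomial system and then to reduce it, using its built-in symmetry, to counting admissible roots of one variable. Write $x=u_{-1}$, $y=u_1$ and specialize \eqref{Re} with $k=3$ to the pattern \eqref{up}. The recurrence at $i=2$ and $i=4$ returns $u_3=x$ and $u_5=y$ automatically, so the period closes exactly when the entries $u_2$ and $u_4$ produced at $i=1$ and $i=3$ equal $1$; this gives
\[
(x+y-\tau)\,y^{3}+\tau y=2,\qquad (x+y-\tau)\,x^{3}+\tau x=2 .
\]
These must be solved with $x,y>0$ and, by \eqref{ut}, with $x+y<\tau$; moreover the boundedness estimate proved above confines any admissible solution to a fixed compact box, which is what ultimately makes the number of solutions finite and furnishes the upper bounds. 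The system is symmetric under $x\leftrightarrow y$, and this exchange is induced by the cyclic shift by two periods; by Remark \ref{rk} it does not change the GGM, so it suffices to enumerate unordered admissible pairs $\{x,y\}$.

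Next I would split the solution set along the diagonal. Subtracting the two equations factors out $(x-y)$. On the symmetric stratum $x=y$ one obtains $2x^{4}-\tau x^{3}+\tau x-2=0$, which factors as $(x-1)(x+1)(2x^{2}-\tau x+2)=0$; apart from the ever-present $x=1$ this produces the pair $x=\tfrac14\!\left(\tau\pm\sqrt{\tau^{2}-16}\right)$, real and positive exactly for $\tau\ge 4$. These two are the degenerate ($2$-periodic) members; since their product is $1$, they are related by a one-period cyclic shift and a positive rescaling, hence by Remark \ref{rk} they contribute a single GGM that exists only for $\tau\ge 4$. On the genuinely $4$-periodic stratum $x\ne y$ I set $s=x+y$, $p=xy$; dividing the two equations yields $p=\dfrac{2s^{2}}{2+\tau s}$, and substituting this into their sum gives
\[
\tau=\frac{s^{4}+2}{\,s^{3}-s\,},
\]
while the existence of distinct positive $x,y$ amounts to $s^{2}-4p>0$, i.e.\ $\tau s>6$, equivalently $(s^{2}-2)(s^{2}-4)>0$. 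Thus the whole problem becomes the analysis of one rational curve $\tau=\tau(s)$ restricted to this admissible range.

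I would then track how many admissible branches this curve carries as $\tau$ grows, and translate the total into GGMs via Remark \ref{rk}. The thresholds are of two types. Two are algebraic: $\tau=4$, where the symmetric pair becomes real, and $\tau=3\sqrt2$, where an off-diagonal branch hits the realizability boundary $s^{2}-4p=0$ at $s=\sqrt2$ and there merges into the symmetric solution $x=y=1/\sqrt2$; this coincidence is exactly why $3\sqrt2$ is singled out in the statement (two solutions collapse, so the count drops by one there). The remaining thresholds are transcendental: they are the values of $\tau$ at which the eliminated one-variable polynomial acquires a double positive root, i.e.\ where its discriminant vanishes. These cannot be given in closed form and must be located numerically, yielding $\tau^{(1)}_{\mathrm{cr}}\approx 3.13039$ and $\tau^{(2)}_{\mathrm{cr}}\approx 4.01009$. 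Folding in the identifications (the swap $\{x,y\}$ and the diagonal rescaling) then converts the branch counts into the successive totals $1,2,3,4$ asserted in the four cases.

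The main obstacle is the precise root-counting around the transcendental thresholds and the exact-versus-upper-bound distinction. Establishing the \emph{exact} counts in the lower ranges requires proving that the admissible solutions produced are pairwise non-equivalent as GGMs, i.e.\ that no identification beyond the swap and the diagonal rescaling of Remark \ref{rk} occurs; this is delicate because it must rule out accidental coincidences of branches. In the upper ranges, where the discriminant analysis cannot by elementary means exclude a double root or a spurious merging, one only confines the solutions to the compact box given by the boundedness estimate and bounds the number of intersections there, which is why the statement can assert only \emph{at most} three (resp.\ four) such measures. Carrying out the discriminant computation and the numerical localization of $\tau^{(1)}_{\mathrm{cr}},\tau^{(2)}_{\mathrm{cr}}$, and verifying the sign pattern of $p$, $s^{2}-4p$ and $\tau-s$ on each branch, is the technical heart of the argument.
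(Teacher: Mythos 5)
First, a structural point: this theorem is not proved in the paper you were given. It is quoted verbatim from \cite{HR} (``In \cite{HR}, the case $k\geq 3$ is studied, in particular, the following theorem is proved''), so there is no internal proof to compare you against; your argument has to stand on its own. Up to a point it does: specializing \eqref{Re} with $k=3$ to the pattern \eqref{up} gives exactly the two closure equations $(x+y-\tau)y^{3}+\tau y=2$ and $(x+y-\tau)x^{3}+\tau x=2$, the diagonal stratum does factor as $(x-1)(x+1)(2x^{2}-\tau x+2)$ with the reciprocal pair identified by Remark \ref{rk}, and on the off-diagonal stratum your relations $p=2s^{2}/(2+\tau s)$, $\tau=(s^{4}+2)/(s^{3}-s)$ and the admissibility condition $(s^{2}-2)(s^{2}-4)>0$, i.e.\ $s\in(1,\sqrt2)\cup(2,\infty)$, are all correct (and $s<\tau$ is automatic there).

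The genuine gap is the final counting step, which you do not perform but merely assert will ``yield $\tau^{(1)}_{\mathrm{cr}}\approx 3.13039$ and $\tau^{(2)}_{\mathrm{cr}}\approx 4.01009$'' from a discriminant computation. Your own reduction does not produce these numbers. The number of admissible $s$ can change only at critical points of $\tau(s)$ or at the boundary points $s=\sqrt2,\,2$ of the admissible set. The numerator of $\tau'(s)$ is $s^{6}-3s^{4}-6s^{2}+2$, whose unique admissible root is $s^{*}\approx 2.071$, giving a fold at $\tau(s^{*})\approx 2.994$; the boundary events are $\tau(2)=3$ (the branch on $(2,\infty)$ merging into the constant solution $x=y=1$) and $\tau(\sqrt2)=3\sqrt2$ (the branch on $(1,\sqrt2)$ merging into the diagonal solution), and the diagonal pair itself appears at $\tau=4$. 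Hence the thresholds coming out of your parametrization are $\{\approx 2.994,\;3,\;4,\;3\sqrt2\}$, with the non-monotone count profile $1,3,2,3,4$ across the corresponding intervals --- not the profile $1,2,\le 3,\le 4$ with thresholds $3.13039,\,4,\,4.01009$ that you are supposed to prove. Concretely, take $s=2.3$: then $\tau=(2.3^{4}+2)/(2.3^{3}-2.3)\approx 3.039<\tau^{(1)}_{\mathrm{cr}}$, $p\approx 1.177$, and $(x,y)\approx(1.531,\,0.769)$ is a positive pair with $x+y<\tau$ satisfying both closure equations (this is elementary arithmetic to verify); by Theorem \ref{KF} and Remark \ref{rk} it gives a type-\eqref{up} GGM distinct from the free one, so assertion 1 of the statement fails under your reading of ``GGM associated to a boundary law of the type \eqref{up}''. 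In other words, your approach, carried out honestly, terminates in a conclusion incompatible with the statement: either \cite{HR}'s theorem rests on a different (more restrictive, or differently normalized) one-variable reduction than yours, or the counts in the low-$\tau$ regime need the correction above. Either way, the proposal cannot be accepted as a proof: the decisive discriminant/root-counting step is missing, and the specific critical values you promise to recover numerically are not zeros of any discriminant attached to the polynomial you actually derived.
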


Another kind of 4-periodic sequence is 
$$u_{-1}, 1, u_1, u_2, u_{-1}, 1, u_1, u_2, u_{-1}, 1, u_1, u_2, \dots.$$
Here $x:=u_{-1}$, $y:=u_1$, $z:=u_2\ne 1$ satisfy the following
\begin{equation}\label{z4} \begin{array}{lll}
		x=(x+y-\tau)z^k+\tau z-y\\[2mm]
		1=(x+y-\tau)x^k+\tau x -z\\[2mm]
		z=(x+y-\tau)y^k+\tau y-1.
		\end{array}
\end{equation}

Take $x=1$ then from the second equation we get $z=y$. Then from the first (or third) equation one gets 
\begin{equation}\label{zy4} 
		y=(1+y-\tau)y^k+\tau y-1 \ \ \Leftrightarrow \ \ y^{k+1}-(\tau-1)y^k+(\tau-1)y-1=0.
\end{equation}		
This equation has the solution $y=1$ independently of the parameters $\tau, k$. Dividing both sides by $y-1$
we get
\begin{equation}\label{uy22}
y^k-(\tau-2)\sum_{j=1}^{k-1}y^j+1=0.
\end{equation}
The equation (\ref{uy22}) has again the solution $y=1$ iff $\tau=\tau_1$, where
$$\tau_1:={2k\over k-1}.$$

It is well known (see \cite{Pra}, p.28) that the number of positive
roots of the polynomial (\ref{uy22}) does not exceed the number of sign
changes of its coefficients.
Since  $\tau>2$, the number of positive roots of
the polynomial (\ref{uy22}) is at most 2.

The following lemma gives the full analysis of the equation (\ref{uy22}):

\begin{lemma}\label{y6} For each $k\geq 2$ the following assertions hold
	\begin{itemize}
		\item[1.]  if $\tau<\tau_1$ then (\ref{uy22}) has no positive solution;
		\item[2.] if $\tau=\tau_1$ then the equation has a unique positive solution, denoted by $y_1$;
		\item[4.] if $\tau>\tau_1$,
		then it has exactly two solutions, denoted $y_1$, $y_2$;
		\end{itemize}
\end{lemma}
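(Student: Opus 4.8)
The plan is to reduce the root-counting problem for \eqref{uy22} to the analysis of a single real function and its unique extremum. Since $k\ge 2$, for every $y>0$ the sum $S(y):=\sum_{j=1}^{k-1}y^{j}$ is strictly positive, so \eqref{uy22} is equivalent, after isolating $\tau$, to $\tau=h(y)$ where
\[
h(y):=2+\frac{y^{k}+1}{S(y)},\qquad y>0 .
\]
Thus counting positive roots of \eqref{uy22} for a fixed $\tau$ is the same as counting the preimages of $\tau$ under $h$ on $(0,\infty)$. First I would record the boundary behaviour: as $y\to 0^{+}$ the ratio tends to $+\infty$ (numerator $\to 1$, denominator $\to 0^{+}$), and as $y\to +\infty$ it behaves like $y$, so $h(y)\to +\infty$ at both ends of $(0,\infty)$. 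A direct substitution also gives $h(1)=2+\tfrac{2}{k-1}=\tau_{1}$.

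The heart of the argument is to show that $h$ is strictly decreasing on $(0,1)$ and strictly increasing on $(1,\infty)$, so that $y=1$ is its unique global minimiser with value $\tau_{1}$. Writing $N(y)=y^{k}+1$, the sign of $h'$ equals the sign of $D(y):=N'(y)S(y)-N(y)S'(y)$. Using the closed form $S(y)=(y^{k}-y)/(y-1)$ I would clear denominators and simplify; the key computation, which I expect to be the main obstacle, is to verify the factorisation
\[
(y-1)^{2}D(y)=y^{2k}-k\,y^{k+1}+k\,y^{k-1}-1=(y^{2}-1)\,T(y),\qquad T(y):=\sum_{i=0}^{k-1}y^{2i}-k\,y^{k-1}.
\]
Granting this, the sign analysis becomes transparent: by the AM--GM inequality applied to the $k$ positive numbers $y^{0},y^{2},\dots,y^{2(k-1)}$, whose geometric mean is exactly $y^{k-1}$, one gets $T(y)\ge 0$ with equality iff $y=1$. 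Hence for $y\neq 1$ the sign of $D(y)$ equals the sign of $y^{2}-1$, i.e. $D<0$ on $(0,1)$ and $D>0$ on $(1,\infty)$, which is precisely the claimed monotonicity of $h$.

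Finally I would assemble the three cases. Since $h$ attains its strict global minimum $\tau_{1}$ only at $y=1$: if $\tau<\tau_{1}$ then $h(y)=\tau$ has no solution; if $\tau=\tau_{1}$ the unique solution is $y_{1}=1$; and if $\tau>\tau_{1}$ then, because $h$ is continuous, strictly monotone on each side of $1$, and blows up to $+\infty$ as $y\to 0^{+}$ and as $y\to +\infty$, the intermediate value theorem yields exactly one solution in $(0,1)$ and exactly one in $(1,\infty)$, i.e. exactly two positive roots $y_{1},y_{2}$. This is consistent with the Descartes bound of at most two positive roots noted after \eqref{uy22}, and completes the proof. The only genuinely computational point is the polynomial identity for $(y-1)^{2}D(y)$; everything else is monotonicity together with AM--GM.
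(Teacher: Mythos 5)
Your proof is correct, and it supplies something the paper itself does not: the paper's ``proof'' of Lemma~\ref{y6} is a single sentence deferring to Lemma 4.7 of \cite{HKLR} and Lemma 1 of \cite{HR}. I checked your computational claims and they hold: with $N(y)=y^k+1$ and $S(y)=\sum_{j=1}^{k-1}y^j$ one indeed has the polynomial identity
\begin{equation*}
(y-1)^2\bigl(N'(y)S(y)-N(y)S'(y)\bigr)=y^{2k}-k\,y^{k+1}+k\,y^{k-1}-1=(y^2-1)\Bigl(\sum_{i=0}^{k-1}y^{2i}-k\,y^{k-1}\Bigr),
\end{equation*}
and AM--GM applied to $1,y^2,\dots,y^{2(k-1)}$, whose geometric mean is exactly $y^{k-1}$, pins down the sign of $h'$ on either side of $y=1$; the boundary limits, $h(1)=\tau_1$, and the intermediate value argument are all sound, and the conclusion matches the Descartes bound stated before \eqref{uy22} and the remark that $y=1$ solves \eqref{uy22} iff $\tau=\tau_1$ (so your $y_1=1$ in case 2 is consistent). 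Your overall scheme --- isolate $\tau$ as a function $h(y)$ of the unknown and prove strict unimodality with minimum value $\tau_1$ --- is the same device used in the cited proofs, but those treat analogues such as \eqref{x3}, whose leading coefficient $2$ destroys the symmetry, so the minimiser there is not explicit and the critical value $\tau_c$ of Lemma~\ref{l6} is only characterized implicitly; your case is cleaner precisely because \eqref{uy22} is self-reciprocal (equivalently $h(y)=h(1/y)$, roots occurring in pairs $y,1/y$), which forces the minimum to sit at $y=1$, and your factorisation plus AM--GM exploits exactly this structure to give a short, fully self-contained and explicit proof. One small point to make airtight: your expression for $D(y)$ is derived from the closed form $S(y)=(y^k-y)/(y-1)$, which is only valid for $y\neq 1$; since both sides of the displayed identity are polynomials, the identity extends to all $y$ by continuity, and $h$ is smooth at $y=1$ with $h'(1)=0$, so the claimed strict decrease on $(0,1]$ and strict increase on $[1,\infty)$ do follow --- but this should be said explicitly.
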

\begin{proof} It is very similar to the proof of Lemma 4.7 in \cite{HKLR} and Lemma 1 in \cite{HR}.
\end{proof}	
By this lemma and the symmetry of $x$ and $y$ we have the following solutions of (\ref{z4}):
\begin{equation}\label{xs} 
	(1,1,1), (1,y_1,y_1), (1,y_2,y_2),
(y_1,1,y_1), (y_2, 1, y_2).
\end{equation}
It seems difficult to solve system (\ref{z4}) for $k\geq 3$ and $x\ne 1$, $y\ne 1$. But for $k=2$ it is possible to give all solutions explicitly. Indeed, subtracting the third equation from the second one  we get
$$(x+y-\tau)(x^2-y^2)+\tau (x-y)=0 \Leftrightarrow (x-y)[(x+y-\tau)(x+y)+\tau]=0.$$
Since $x\ne y$ the last equality gives
$$x+y={1\over 2}(\tau\pm\sqrt{\tau(\tau-4)}), \ \ \tau\geq \tau_1=4.$$
Using these values of $x+y$ from the first equation of (\ref{z4}) we get (since $z\ne 1$),
$$ z={1\over 2}(\tau-2+\sqrt{\tau(\tau-4)}), \ \ \mbox{for} \ \ x+y={1\over 2}(\tau+\sqrt{\tau(\tau-4)}).$$
Consequently, 
$$x=z={1\over 2}(\tau-2+\sqrt{\tau(\tau-4)}), \ \ y=1.$$
and 
$$z={1\over 2}(\tau-2-\sqrt{\tau(\tau-4)}), \ \ \mbox{for} \ \ x+y={1\over 2}(\tau-\sqrt{\tau(\tau-4)}),$$
$$x=1, \ \ y=z={1\over 2}(\tau-2-\sqrt{\tau(\tau-4)}).$$

{\it Sub-case:}  $q=5$.
In general, a 5-periodic sequence  is 
$$u_{-1}, 1, u_1, u_2, u_3, u_{-1}, 1, u_1, u_2, u_3, u_{-1}, \dots,$$
where $x:=u_{-1}$, $y:=u_1$, $z=u_2$, $t=u_3$ satisfy the following
\begin{equation}\label{x5g} \begin{array}{llll}
		z=(x+y-\tau)y^k+\tau y -1\\[2mm]
		t=(x+y-\tau)z^k+\tau z-y\\[2mm]
		x=(x+y-\tau)t^k+\tau t-z\\[2mm]
		1=(x+y-\tau)x^k+\tau x-t.
	\end{array}
\end{equation}
In this system of equations we consider the following particular cases:

{\it Sub-sub-case:} we assume $z=y$ and $t=1$ then the system is reduced to
\begin{equation}\label{x5} \begin{array}{ll}
		1=(x+y-\tau)x^k+\tau x -1\\[2mm]
		y=(x+y-\tau)y^k+\tau y-1.
	\end{array}
\end{equation}
Note that in this system of equations $x=1$ if and only if $y=1$. Assume now that $x\ne 1$ and $y\ne 1$.
Define 
$$\xi(x)=x^{-k}(1-(\tau-1)x+\tau x^k-x^{k+1}), $$
$$\eta(x)=x^{-k}(2-\tau x+\tau x^k-x^{k+1}). $$

For $x$ satisfying $\eta(x)>0$, the system (\ref{x5}) is reduced to find positive solutions of $x=\xi(\eta(x))$. In general, it is difficult to solve this equation, which is a polynomial equation of degree $k^2+k+1$. For the case $k=2$ the equation has degree 7, one solution is $x=1$. Dividing by $x-1$ we get the following equation
$$
\zeta(x):=	(2\tau-1)x^6-(3\tau^2-3\tau+2)x^5+(\tau^3+6\tau-2)x^4$$
	\begin{equation}\label{66}
	-(\tau^3+2\tau^2+4\tau)x^3+(6\tau^2-4\tau+8)x^2-(12\tau-8)x+8=0.
	\end{equation}
Since $\tau>2$ the number of sign changes of coefficients in LHS of (\ref{66}) is 6. 
Therefore it may have up to six positive solutions. 

 Following Fig. \ref{t51}, for $\tau=8$, one can see that the equation may have maximum (i.e. seven) number of positive solutions, including 1. Moreover, at each these solutions the corresponding value of $\eta(x)$ is positive. Thus for $k=2$, $\tau=8$ there are exactly seven 5-periodic sequences, each has the form 
 $$x, 1, y, y, 1, x, 1, y, y, 1, x, \dots$$
 
\begin{figure}[h!]
	\includegraphics[width=7cm]{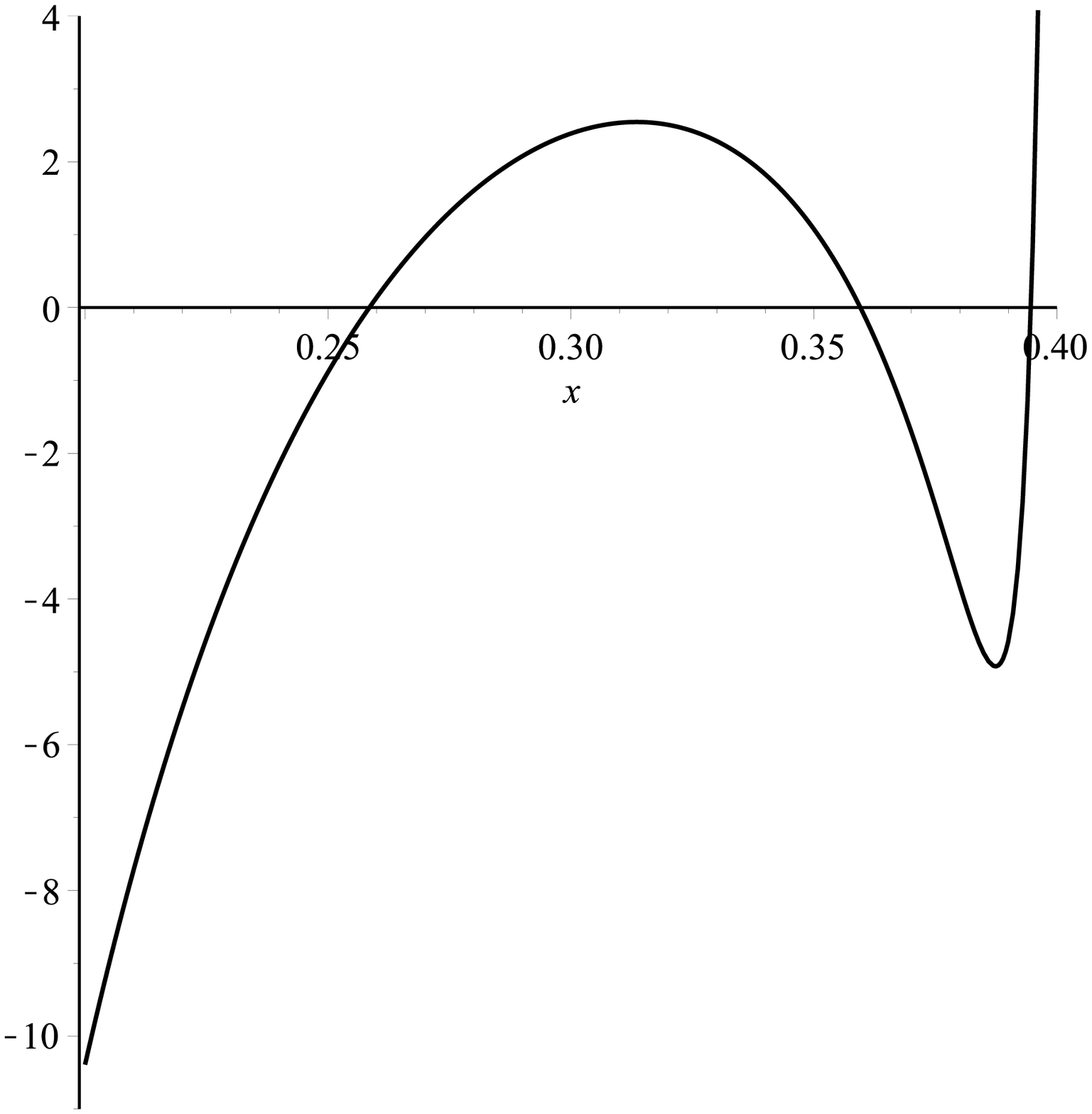}
	\includegraphics[width=7cm]{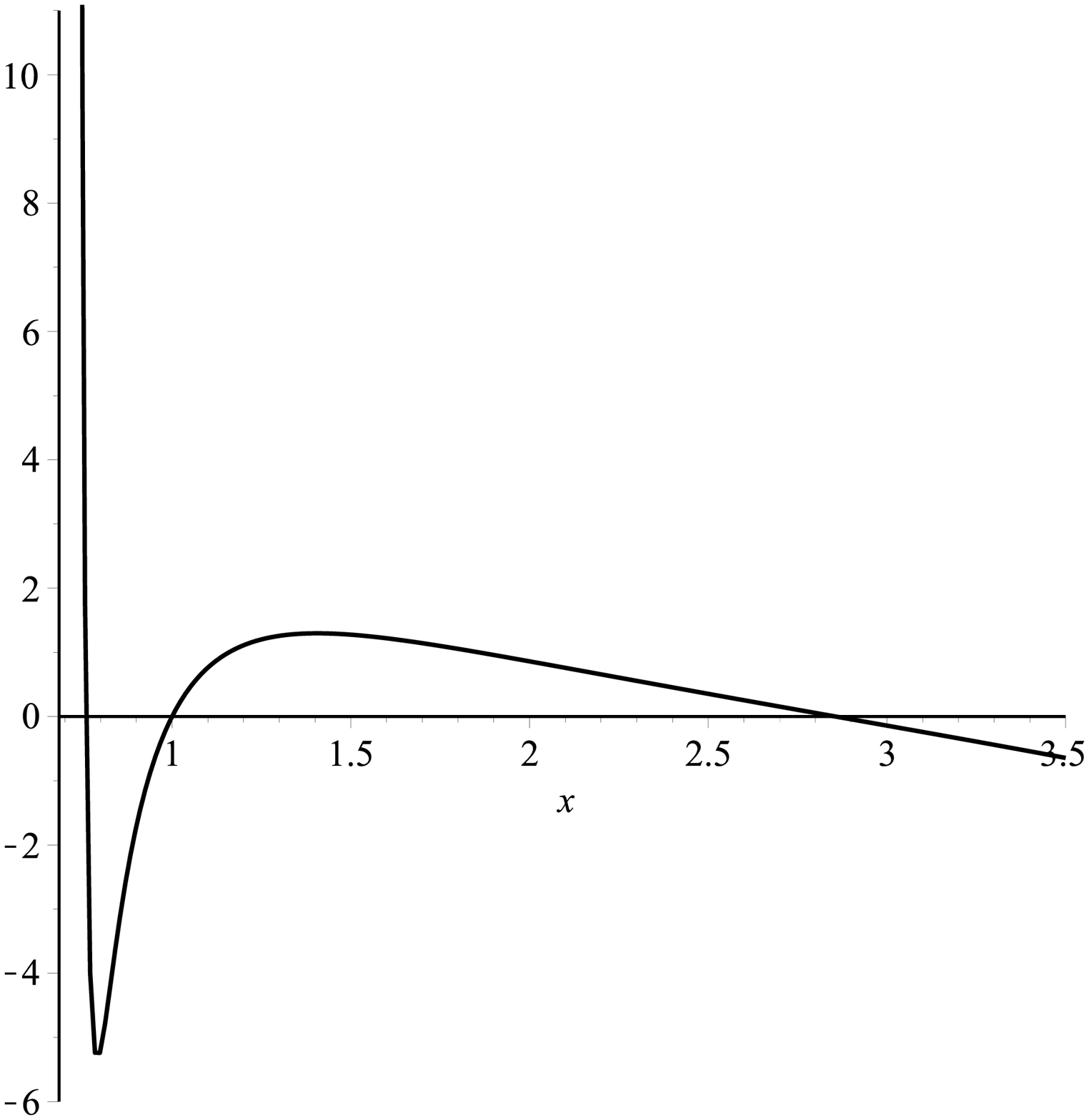}
	\includegraphics[width=7cm]{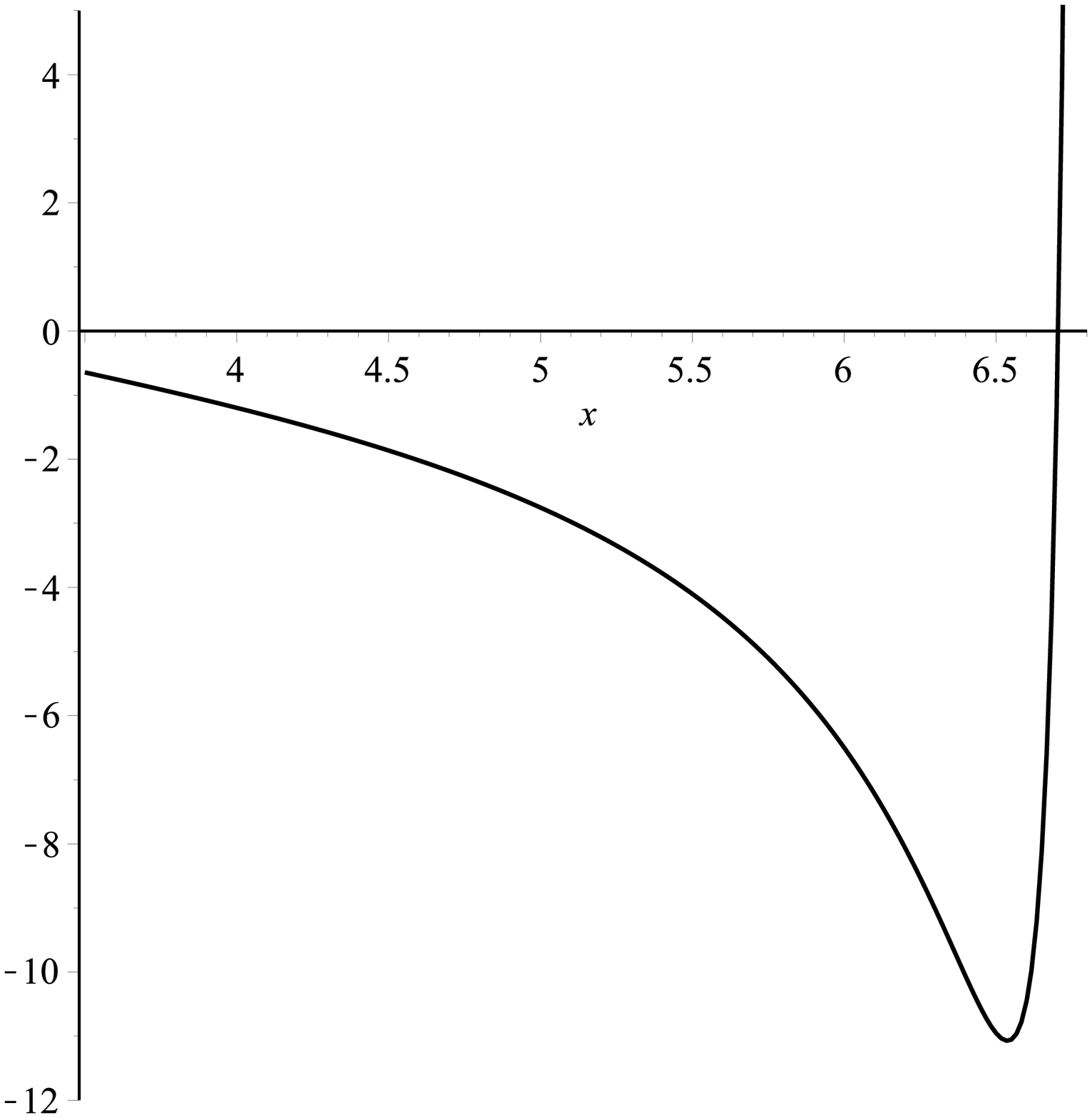}
	\caption{ The graph of the function $\zeta(x)$ for the case $\tau=8$, and $x\in (0.2, 0.4)\cup (0.7, 3.5)\cup (3.5, 6.8)$ separately to show all 7 solutions.}\label{t51}
\end{figure}

{\it Sub-sub-case:} we assume $z=1$ and $t=x$ then the system is reduced to
\begin{equation}\label{x5y} \begin{array}{ll}
		x=(x+y-\tau)x^k+\tau x -1\\[2mm]
		1=(x+y-\tau)y^k+\tau y-1.
	\end{array}
\end{equation}
We note that by exchanging $x$ and $y$ in (\ref{x5y})
 we obtain (\ref{x5}). Therefore these two systems are identical (up to rename of unknowns.) Such 5-periodic sequences have the form
 $$x, 1, y, 1, x, x, 1, y, 1, x, x, \dots$$
which coincide with sequences  of the previous sub-sub-case.

Recall that by Theorem \ref{KF} to each $q$-height-periodic boundary law corresponds a GGM and using the identifiability results mentioned in Remark \ref{rk}  we obtain the following main result of our paper:

\begin{thm} For the SOS model (\ref{nu1})  on the Cayley tree of order $k\geq 2$ the following assertions hold
		\begin{enumerate}
		\item[1.] If $k=2$ then\footnote{All GGMs mentioned in this theorem are different from such measures of \cite{HKLR} and \cite{HR}.}
		\begin{itemize}
		\item[1.1.] for $\tau\in (4, 2(1+\sqrt{5}))$ there are exactly two 4-height-periodic mirror symmetric GGMs ;
		\item[1.2.] if $\tau=2(1+\sqrt{5})$ then there are exactly three 4-height-periodic mirror symmetric GGMs.
		\item[1.3] if $\tau>2(1+\sqrt{5})$  then there are exactly four such GGMs. {\rm (These measures correspond to the solutions (\ref{chor}).)}
		\end{itemize}
		\item[2.] If $k=2$, $\tau=8$  then there are exactly six 5-height-periodic mirror symmetric GGMs, which correspond to the solutions (except 1) shown in Fig. \ref{tau8}.
		\item[3.] For each $k\geq 2$, there is critical value $\tau_c=\tau_c(k)$ such that
		\begin{itemize}
			\item[3.1.] if $\tau=\tau_c$ or $\tau=\tau_0={2k+1\over k-1}$ then there is one  3-height-periodic, non-mirror symmetric GGMs;
			\item[3.2.] if $\tau>\tau_c$, $\tau\ne \tau_0$
			then there are two 3-height-periodic, non-mirror symmetric GGMs. {\rm (These measures correspond to the solutions (\ref{5x}).)}
		\end{itemize}
		
		In particular, if $k=2$ then $\tau_c(2)=2(1+\sqrt{2})$.
		\item[4.] For each $k\geq 2$ there is critical value 
		$t_1={2k\over k-1}$ such that 
		\begin{itemize}
			\item[4.1.] if $\tau=\tau_1$ then there is one  
			4-height-periodic non-mirror symmetric GGMs;
			\item[4.2.] if $\tau>\tau_1$ then there are two  
			such GGMs. {\rm (These measures correspond to the solutions (\ref{xs}).)}
			\end{itemize} 
		 \item[5.] If $k=2$, $\tau=8$  then there are six 5-height-periodic non-mirror symmetric  GGMs.
	\end{enumerate}
\end{thm}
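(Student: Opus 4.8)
The plan is to read the theorem off as a bookkeeping corollary of the explicit solution inventories compiled in Section~2, once these are passed through the two structural facts recorded earlier. By Theorem~\ref{KF} every $q$-height-periodic boundary law of the (translation-invariant) SOS potential produces a translation-invariant GGM, and by Remark~\ref{rk} two such boundary laws yield the \emph{same} GGM precisely when they agree up to a cyclic shift and multiplication by a positive constant, while otherwise they yield distinct GGMs. By Proposition~\ref{pps: 1} the relevant boundary laws are in bijection with the strictly positive $q$-periodic solutions $\{u_i\}$ of \eqref{Re}--\eqref{ut}. Hence counting GGMs reduces to counting the positive solutions already exhibited in each sub-case and then quotienting by the cyclic-shift action, the scaling freedom having been fixed by the normalization $u_0=1$.

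The crux is the cyclic-shift identification, which I would apply sub-case by sub-case. In the non-mirror $q=3$ situation the five solutions \eqref{5x} organise as follows: $(1,1)$ is the constant (hence $1$-periodic) law and is discarded, while each nontrivial pair satisfies $(1,x_j)\sim(x_j,1)$ because the associated period blocks $(1,x_j,1)$ and $(1,1,x_j)$ are cyclic shifts of one another; thus each nontrivial root of \eqref{x3a} contributes exactly one GGM. Since \eqref{x3a} is carried to \eqref{x3} by $t=1/x$, Lemma~\ref{l6} furnishes one nontrivial root when $\tau=\tau_c$ or $\tau=\tau_0$ and two nontrivial roots when $\tau>\tau_c$, $\tau\neq\tau_0$, giving the counts of assertion~3. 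The non-mirror $q=4$ case is handled identically: in \eqref{xs} the block of $(y_j,1,y_j)$ is a cyclic shift of that of $(1,y_j,y_j)$, so each positive root of \eqref{uy22} yields one GGM, and Lemma~\ref{y6} supplies one root at $\tau=\tau_1$ and two for $\tau>\tau_1$, giving assertion~4.

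For the mirror-symmetric $k=2$ regimes I would argue instead from the explicit formulas. In the $q=4$ case the four candidate values \eqref{chor} come with companions $y_i=g(x_i)$ that are positive (Fig.~\ref{gxi}) and satisfy $2x_i-\tau<0$, so each is a legitimate positive boundary law; the pair $(x_1,x_2)$ is available for $\tau>4$ and $(x_3,x_4)$ for $\tau\geq 2(1+\sqrt{5})$, with $x_3=x_4$ exactly at the threshold (where the inner discriminant $\tau^2-4\tau-4\sqrt{\tau^2-4\tau}$ vanishes). Because the period block $(x_i,1,x_i,y_i)$ places its entry $1$ at an even position while its cyclic shifts do not, and $y_i\neq 1$ rules out a hidden $2$-period, no two of these solutions are cyclic shifts; tallying the distinct existing values across the two thresholds yields the $2/3/4$ dichotomy of assertion~1. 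For the $q=5$, $\tau=8$ cases (assertions~2 and~5) I would rely on the root counts displayed in Fig.~\ref{tau8} and Fig.~\ref{t51}: each exhibits seven positive roots, of which $x=1$ is the constant law; the remaining six are pairwise inequivalent under cyclic shift, producing six GGMs in each case.

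The main obstacle is precisely this cyclic-shift bookkeeping, which must be executed without error in both directions: one has to recognise exactly which explicit solutions coincide after a shift (so as not to over-count, as with the pairs in \eqref{5x} and \eqref{xs}), confirm that the survivors are genuinely pairwise inequivalent (so as not to under-count), and verify that each allegedly $q$-periodic solution is not secretly of smaller period and is strictly positive throughout, so that it is a bona fide boundary law to which Theorem~\ref{KF} applies. A further point is that the mirror-symmetric families of assertions~1--2 cannot collide with the non-mirror families of assertions~4--5 under a shift: mirror symmetry, being the property that the underlying cyclic sequence admits a palindromic representative, is invariant under rotation, so mirror and non-mirror necklaces form disjoint classes and hence disjoint sets of GGMs. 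For the $\tau=8$, $q=5$ assertions the inequivalence of the six surviving roots rests on the numerics of Fig.~\ref{tau8} and Fig.~\ref{t51}, so the delicate step there is to certify that the displayed roots are genuinely distinct and that none of the six is a cyclic shift of another.
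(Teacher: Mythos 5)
Your skeleton is the same as the paper's: the paper, too, obtains the theorem by feeding the solution inventories of Section 2 into Theorem \ref{KF} (existence) and Remark \ref{rk} (identifiability), and your identifications $(1,x_j)\sim(x_j,1)$ for \eqref{5x} and $(1,y_j,y_j)\sim(y_j,1,y_j)$ for \eqref{xs} are exactly the ones the stated counts presuppose. But two of your steps are false, and they concern precisely the bookkeeping you yourself single out as the crux. First, the scaling freedom is \emph{not} ``fixed by the normalization $u_0=1$''. The equivalence in Remark \ref{rk} is cyclic shift \emph{and} multiplication by a positive constant, and after shifting a normalized block one may rescale so that the entry at the new origin is again $1$; this produces nontrivial identifications between distinct normalized solutions. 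Concretely, in assertion 4 the polynomial \eqref{uy22} is palindromic, so its two positive roots are reciprocal, $y_2=1/y_1$ (for $k=2$ this is Vieta for $y^2-(\tau-2)y+1=0$); then $y_2\cdot(y_1,y_1,1,1)=(1,1,y_2,y_2)$, i.e.\ the block for $y_2$ is a cyclic shift of the block for $y_1$ multiplied by a positive constant. By Remark \ref{rk} (and directly from the formula in Theorem \ref{KF}, where a constant factor is absorbed into the normalization) the boundary laws for $y_1$ and $y_2$ therefore define the \emph{same} GGM, not two. The same mechanism pairs the four solutions \eqref{chor}: shifting $(1,x,y,x)$ by two and dividing by $y$ produces the mirror solution $(x/y,1/y)$, and at $\tau=8$ one checks numerically that $(x_2,y_2)=(x_1/y_1,1/y_1)$ and $(x_4,y_4)=(x_3/y_3,1/y_3)$. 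So the pure-shift comparison, which is all your argument performs, cannot deliver the counts in assertions 1 and 4.

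Second, your disjointness argument for the mirror versus non-mirror families rests on a wrong premise. The paper's mirror symmetry is anchored at $u_0$ (it requires $u_{-1}=u_1$); it is not the rotation-invariant property ``the necklace admits a palindromic representative''. In fact every sequence appearing in the theorem admits a palindromic representative: $(1,1,y,y)$ has the representative $(1,y,y,1)$, and the assertion-5 block $(x,1,y,y,1)$ is a palindrome centered at $x$. So your dichotomy separates nothing. Worse, shift-plus-rescaling actually maps the families onto each other: dividing the shift-by-one of the assertion-2 block $(1,x,y,y,x)$ by $x$ gives $(1,y/x,y/x,1,1/x)$, which is exactly an assertion-5 block with $(x',y')=(1/x,y/x)$; similarly the ``non-mirror'' $q=3$ block $(1,1,x_j)$, shifted and divided by $x_j$, becomes the mirror block $(1,t_j,t_j)$ with $t_j=1/x_j$ a root of \eqref{x3}, i.e.\ one of the solutions of \cite{HKLR}. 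Hence, under the very identifiability criterion the whole argument relies on, the six measures of assertion 2 and the six of assertion 5 are the same measures, and assertion 3's measures coincide with the $3$-periodic mirror-symmetric ones of \cite{HKLR}, contradicting both your claimed disjointness and the footnote. The missing idea is to count equivalence classes modulo shift \emph{and} scale; this is not a formality, since carrying it out visibly changes several of the counts you set out to prove (a difficulty the paper's own one-line derivation does not confront either).
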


\section*{Data availability statements}
 The datasets generated during and/or analysed during the current study 
 are available from the author (U.A.Rozikov) on reasonable request.

\section*{ Acknowledgements}
The author thanks both referees for their useful and helpful comments. 

The work supported by the fundamental project (number: F-FA-2021-425)  of The Ministry of Innovative Development of the Republic of Uzbekistan.

\end{document}